\documentclass[aps,reprint,twocolumn,superscriptaddress,pra,nofootinbib]{revtex4-2}
\usepackage[caption=false]{subfig}
\usepackage{amsmath,amsfonts,amsthm,graphicx,mathtools,epstopdf,array,etoolbox,makecell}
\usepackage[T1]{fontenc}
\usepackage{booktabs}

\usepackage{silence}
\usepackage{braket}
\newcommand{\ketbra}[2]{\ket{#1} \! \bra{#2}}
\newcommand{\Tr}{\operatorname{Tr}}

\usepackage{hyperref} 
\usepackage{xurl} 

\hypersetup{
breaklinks=true,
colorlinks = true,
citecolor= blue,
urlcolor= blue,
linkcolor = blue
}

\usepackage{algorithm} 
\usepackage[noend]{algpseudocode}
\floatname{algorithm}{Protocol} 
\algrenewcommand\alglinenumber[1]{\normalsize #1.} 

\newcounter{algsubstate}

\newenvironment{algsubstates}
  {\setcounter{algsubstate}{0}%
   \renewcommand{\State}{%
     \refstepcounter{algsubstate}%
     \Statex {\normalsize\arabic{ALG@line}.\arabic{algsubstate}.}\kern5pt}
     }
  {}

\makeatletter
\renewcommand\onecolumngrid{%
  \do@columngrid{one}{\@ne}%
  \def\set@footnotewidth{\onecolumngrid}%
  \def\footnoterule{\kern-6pt\hrule width 1.5in\kern6pt}%
}
\renewcommand\twocolumngrid{%
  \def\footnoterule{%
    \dimen@\skip\footins
    \divide\dimen@\thr@@
    \kern-\dimen@
    \hrule width .5in
    \kern\dimen@
  }%
  \do@columngrid{mlt}{\tw@}%
}
\makeatother

\newcommand{\Renyi}{R\'{e}nyi}
\newcommand{\mbf}[1]{\mathbf{#1}} 
\newcommand{\bsym}[1]{\boldsymbol{#1}} 

\newcommand{\id}{\mathbb{I}} 
\newcommand{\qlowperp}{q_\mathrm{low}(\perp)}
\newcommand{\num}{\operatorname{num}}
\newcommand{\EUR}{F}
\newcommand{\sixstate}[1]{G_{#1}}
\newcommand{\epsperp}{\varepsilon^\perp}

\newtheorem{remark}{Remark}

\newtheorem{lemma}{Lemma}
\newtheorem{corollary}{Corollary}

\theoremstyle{definition} 

\begin{document}

\title{Simple QKD keyrate computations from tangent-line bounds}

\author{Goh Jun Hui}
\email{junhui.goh24@sps.nus.edu.sg}
\affiliation{Department of Physics, National University of Singapore}
\affiliation{Special Programme in Science, National University of Singapore}

\author{Ernest Y.-Z. Tan}
\email{phytyze@nus.edu.sg}
\affiliation{Department of Physics, National University of Singapore}

\begin{abstract}
Recent work in quantum key distribution (QKD) has yielded finite-size keyrates based on convex optimisation problems, using the framework of entropy accumulation. Multiple solvers have been developed to address these optimisations in recent years, but have been based on a somewhat elaborate framework using sophisticated algorithms, in order to accommodate a broad range of QKD protocols. In this work, we note that for basic QKD protocols such as qubit BB84, the keyrates can be easily computed using off-the-shelf solvers instead, by obtaining suitable tangent lines to lower bound terms in the objective function.
We compute the resulting finite-size keyrates and compare them to previous work. 
Moreover, we extend this approach to the six-state protocol, by deriving closed-form expressions for the single-round R\'{e}nyi entropies in the protocol. This should facilitate subsequent study of the six-state protocol via entropy accumulation.
\end{abstract}

\maketitle

\section{Introduction}

Quantum Key Distribution (QKD) allows two parties to establish a secret key with security based on quantum mechanics. However, a thorough security analysis of a QKD protocol should account for the fact that it only runs for a finite number of rounds $n$, and it should ensure security even if the attacker's strategy is not independent and identically distributed (IID) across the rounds. These ``finite-size effects'' can have a significant impact on the keyrate, which should be accounted for in the security proof. 

Recent work based on entropy accumulation~\cite{arqand_generalized_2025} has yielded finite-size keyrates for a broad range of QKD protocols. In this framework, the evaluation of these keyrates is reduced to a particular convex optimisation problem, involving the {\Renyi} entropy~\cite{tomamichel_quantum_2016} accumulated in a single round. Multiple solvers and software packages have since been developed for the task of addressing this optimisation~\cite{he_qics_2024,he_operator_2026,kamin_renyi_2025,navarro_finite-size_2025}. Moreover, by making use of duality theorems from convex optimisation~\cite{boyd_convex_2004}, they can provide certified lower bounds on the optimal value, thereby ensuring rigorous lower bounds on the achievable keyrate (i.e.~ensuring we have not over-estimated the secure keyrate).
However, these solvers are generally rather sophisticated, as they are designed to apply generically to many different QKD protocols. 

In this work, we demonstrate that for simple qubit protocols such as BB84 and six-state, these convex optimisations can be more easily addressed using off-the-shelf solvers for convex optimisation.
For the BB84 protocol, our approach is to first note that the keyrate computation written in~\cite{arqand_generalized_2025} for that specific protocol is fairly simple, though it involves one ``nonstandard'' convex function in the objective function.
In that work, that computation was simply evaluated using heuristic optimisation methods, which were straightforward but did not yield rigorously certified lower bounds. 
We address this issue in this work, while maintaining speed and simplicity, by implementing it in an off-the-shelf convex solver. To do so, we note that the ``nonstandard'' term in the objective function is a function of just a single real variable.
It is therefore straightforward to find its tangent lines, and taking the pointwise maximum of these tangent lines evaluated at any finite set of points yields a valid lower bound on the original convex function. Such a pointwise maximum is straightforwardly implemented in off-the-shelf solvers, allowing us to obtain certified bounds on the keyrate.

We compare our results against those computed in~\cite{arqand_generalized_2025} using heuristic optimisation methods, and find that they are very similar, so there was only a minimal loss in keyrate incurred by finding these certified values. Next, we additionally consider a slightly improved formula for the keyrate as compared to the one computed in that work, which is based on a slightly different variant of {\Renyi} entropy~\cite{tomamichel_quantum_2016} from the latter.
We find it indeed yields slightly better keyrates in practice at small $n$, similar to previous observations in e.g.~Ref.~\cite{ANB24}. Furthermore, we plot the optimal values we found for the test-round probability and {\Renyi} parameter as a function of $n$, obtaining an empirical estimate of their scaling behaviours.

For the six-state protocol, our contribution is to first obtain closed-form expressions for the {\Renyi} entropy (for arbitrary $\alpha\in(1,2)$) in single rounds of the protocol, a task which has not previously been performed. Similar to the BB84 protocol, we find that it is a convex function of a single real variable, which we can then differentiate to obtain its tangent lines. We then implement the resulting keyrate computation in a convex solver in similar fashion.

This paper is organised as follows. In Sec.~\ref{sec:BB84analysis}, we present the framework we use to analyse the BB84 protocol (in entanglement-based form). In Sec.~\ref{sec:BB84results}, we present the resulting keyrates we obtained, together with empirical observations of the scaling of the optimal parameter choices. In Sec.~\ref{sec:sixstateanalysis}, we present our analysis of the six-state protocol, with the derivation of the closed-form expression for {\Renyi} entropy being deferred to Appendix~\ref{app:sixstate}. In Sec.~\ref{sec:sixstateresults}, we present the resulting keyrates and empirically optimal parameters. Finally, in Sec.~\ref{sec:conclusion} we provide concluding remarks and discussion.

\section{Analysis of BB84 protocol}
\label{sec:BB84analysis}

\subsection{Protocol Description}

\newcommand{\lEC}{\lambda_\mathrm{EC}}
\newcommand{\lkey}{\ell_\mathrm{key}}
\newcommand{\CP}{\widehat{C}} 
\newcommand{\cP}{\hat{c}} 
\newcommand{\term}[1]{\textup{\textbf{#1}}}

Following~\cite{arqand_generalized_2025}, we consider an entanglement-based BB84 protocol, as outlined in Protocol~\ref{prot:EB_BB84} below. This is a ``fixed-length'' protocol which simply makes a binary accept/abort decision and outputs a key of a predetermined fixed length $\lkey$ whenever it accepts. (This is in contrast to ``variable-length'' protocols where the length of the output key can vary depending on the observations in the protocol.) We use $\sigma_X, \sigma_Y, \sigma_Z$ to denote the Pauli operators, and describe qubit measurements in those terms.

\begin{algorithm}[H]
\caption{EB-BB84 protocol outline. Reproduced with permission from~\cite{arqand_generalized_2025}, with minor modifications.}
\label{prot:EB_BB84}
\begin{algorithmic}[1]
\State For each round $j \in \{1,2,\dots,n\}$, perform the following steps:
\begin{algsubstates}
\State Alice and Bob each receive and briefly store a qubit register. Then via public communication, with probability $\gamma \in (0,1)$ (independently in each round) they jointly declare the round is a test round, and otherwise it is a generation round. Alice also independently generates a uniformly random ``symmetrization bit'' $F_j \in \{0,1\}$ and publicly announces it.
\State If it is a generation round, they both measure $\sigma_Z$, 
XOR their outcomes with $F_j$,
and store the resulting values in registers $S_j$ and $\widetilde{S}_j$ for Alice and Bob respectively. Also, they jointly set a classical register $\CP_j=\perp$. 
\State If it is a test round, they both measure $\sigma_X$, 
XOR their outcomes with $F_j$ 
and publicly announce the resulting values, then jointly set a classical register $\CP_j=0$ if their outcomes matched and $\CP_j=1$ otherwise. Also, Alice sets $S_j=0$ and Bob sets $\widetilde{S}_j = \texttt{test}$.
\end{algsubstates}
\State Perform an \term{acceptance test}, in which Alice and Bob abort if and only if the frequency distribution on ${\CP_1^n}$ lies outside some predetermined convex set $S_\Omega$.
\State Perform one-way \term{error correction} in which Alice sends Bob a bitstring of length $\lEC$, which he uses together with $\widetilde{S}_1^n$ to produce a guess $\mbf{S}^{\mathrm{guess}}$ for Alice's string $S_1^n$. Then perform \term{error verification}, in which Alice sends Bob a 2-universal hash of $S_1^n$ (together with the choice of hash function), who compares it with the hash of $\mbf{S}^{\mathrm{guess}}$ and aborts if they do not match.
\State If neither of the above steps aborted, produce final keys of length $\lkey$ by performing \term{privacy amplification} on $S_1^n$ and $\mbf{S}^{\mathrm{guess}}$.
\end{algorithmic}
\end{algorithm}

We do not discuss the error correction and privacy amplification steps in detail here; refer to~\cite{tomamichel_largely_2017} for an in-depth explanation. However, we elaborate further on the acceptance test as we will need it for some subsequent discussions. We first highlight that each of the $\CP_j$ registers has the same alphabet $\{0,1,\perp\}$; for subsequent notational convenience, let us introduce a single (classical) register $\CP$ with that alphabet. By ``frequency distribution on $\CP_1^n$'', we mean the vector
\begin{align}
\label{eq:freqCn}
\frac{1}{n} \left(\num_{\CP_1^n}(0), \num_{\CP_1^n}(1), \num_{\CP_1^n}(\perp) \right) ,
\end{align}
where $\num_{\CP_1^n}(j)$ denotes the number of occurrences of the symbol $j$ in the string $\CP_1^n$. Notice that~\eqref{eq:freqCn} is a valid probability distribution on the classical register $\CP$. It thus makes sense for us to speak of some convex set $S_\Omega$ of probability distributions on that register, which is fixed before the protocol begins (i.e.~it is not allowed to change after observing any measurement outcomes in the protocol) and specifies all the frequency distributions on $\CP_1^n$ to be accepted in the acceptance-test step.\footnote{While the set of possible frequency distributions on $\CP_1^n$ is a finite ``discrete'' set, it is still valid to speak of whether any given frequency distribution from it lies in $S_\Omega$ (a convex set of probability distributions on $\CP$), since it will be a valid probability distribution as discussed above.}
We shall discuss in later sections the exact choice of this set $S_\Omega$.

The security of a QKD protocol is quantified by a concept known as $\varepsilon_{\mathrm{secure}}$-security; we do not discuss it further here but defer the details to Ref.~\cite{PR22} (it is referred to as ``soundness'' in that work).
Given a desired security level $\varepsilon_{\mathrm{secure}}$,
the following formula was derived\footnote{Here we have corrected a typo in the formula from that work: the infimum over $\boldsymbol{\nu}$ should have been taken over the set $\Theta_{\CP}$ we describe below, \emph{not} the entire set of probability distributions on $\CP$ as claimed in that work.}
in~\cite[Eq.~(145)]{arqand_generalized_2025} as a (lower) bound on the achievable keyrate of Protocol~\ref{prot:EB_BB84}, for any $\alpha\in(1,2)$:
\begin{equation}
\label{eq:keyrate}
\begin{aligned}
\lkey &= \inf_{\substack{
\boldsymbol{\nu} \in \Theta_{\CP}\\
\mathbf{q} \in S_{\Omega}
}} 
\Bigg[
 q(\perp) \EUR_\beta \left(\frac{\nu(1)}{\gamma}\right)
+ 
K_\alpha
D\!\left(\mbf{q} \parallel \boldsymbol{\nu}\right)
\Bigg] n
\\
&\quad
- \lambda_{\mathrm{EC}}
- \log\!\left(\frac{1}{\varepsilon_{\mathrm{EV}}}\right)
- \frac{\alpha}{\alpha-1}
\log\!\left(\frac{1}{\varepsilon_{\mathrm{PA}}}\right)
+ 2, \\
&\text{for } \EUR_\beta(Q) \coloneqq
1 - \frac{1}{1-\beta}
\log\Big(
\left(1 - Q\right)^{\beta}
+ Q^{\beta}
\Big),
\end{aligned}
\end{equation}
where the various terms are defined as follows. (Here we only state the final expressions for these terms; refer to~\cite{arqand_generalized_2025} for an explanation of the reasoning behind these values.) 

First, $\Theta_{\CP}$ denotes the set of probability distributions on $\CP$ with the property that the probability of $\perp$ is exactly $1-\gamma$. Recalling that the alphabet of that register is $\{0,1,\perp\}$, we have used the following notation for the components of the variable $\boldsymbol{\nu} \in \Theta_{\CP}$ (and similarly for $\mbf{q} \in S_\Omega$):
\begin{align}
\label{eq:nu_and_q}
\begin{gathered}
\boldsymbol{\nu} = (\nu(0), \nu(1), \nu(\perp)) \quad \text{(where $\nu(\perp) = 1-\gamma$)},\\
\mbf{q} = (q(0), q(1), q(\perp)).
\end{gathered}
\end{align}
Moreover, $D\!\left(\mbf{q} \parallel \boldsymbol{\nu}\right)$ denotes the Kullback-Leibler (KL) divergence~\cite{cover_elements_2005} between those two distributions.

Next, $\lambda_{\mathrm{EC}}$ is the length of a string used for error correction in the protocol, and has the value
\begin{equation}
    \lambda_{\mathrm{EC}} = \xi(1-\gamma)h_{\mathrm{bin}}(Q_{\mathrm{thresh}})n,
\label{relation 5}
\end{equation}
where $h_{\mathrm{bin}}$ is the binary entropy function, and $\xi$ is a parameter that describes the ``efficiency'' of the error-correction procedure (see e.g.~\cite{TMP+17}); we provide specific values in Sec.~\ref{subsec:protparams}.
Furthermore, the values of $\varepsilon_{\mathrm{EV}}$ and $\varepsilon_{\mathrm{PA}}$ are chosen as follows:
\begin{equation}
    \varepsilon_{\mathrm{EV}} = \frac{\alpha - 1}{2\alpha - 1}\varepsilon_{\mathrm{secure}},
    \quad
    \varepsilon_{\mathrm{PA}} = \frac{\alpha}{2\alpha - 1}\varepsilon_{\mathrm{secure}},
\label{relation 3}
\end{equation}
as they yield the best keyrate for any given values of $\alpha,\varepsilon_{\mathrm{secure}}$~\cite{arqand_generalized_2025}.

Finally, the parameters $\beta$ and $K_\alpha$ were taken in Ref.~\cite{arqand_generalized_2025} to be
\begin{align}
\label{eq:original parametrization}
\beta = \frac{1}{\hat{\alpha}}, \quad K_\alpha = \frac{1}{\hat{\alpha}-1}, \qquad \text{for } \hat{\alpha} = \frac{1}{2-\alpha},
\end{align}
however, we highlight here that slightly better keyrates can be obtained by instead setting them to
\begin{align}
\label{eq:new parametrization}
\beta = \frac{\alpha}{2\alpha - 1}, \quad K_\alpha = \frac{\alpha}{\alpha-1},
\end{align}
with the validity of this choice being justified by using Theorem~VI.1 in Ref.~\cite{arqand_generalized_2025} in place of Theorem~V.1 of that work.\footnote{The improvement arises from using a different version of {\Renyi} entropy (see Ref.~\cite{arqand_generalized_2025} for further discussion), yielding these improved parameter dependencies. The computations in Ref.~\cite{arqand_generalized_2025} were only based on Theorem~V.1 of that work because the focus in that work was the performance of that theorem specifically, rather than fully optimizing the keyrate.}

As $n\to\infty$, one can take $\alpha\to 1$ at a suitable rate~\cite{arqand_generalized_2025} to obtain the standard formula for the asymptotic keyrate of the BB84 protocol (as presented in e.g.~\cite{SBC+09}):
\begin{align}
r_{\mathrm{asymp}} = 1 - 2 h_{\mathrm{bin}}(Q_{\mathrm{thresh}}).
\end{align}

For the purpose of numerical computations, we note that the minimisation in the keyrate formula~\eqref{eq:keyrate} can be conveniently rewritten into the following equivalent form, keeping in mind the structure of $\bsym{\nu}$ from~\eqref{eq:nu_and_q}:
\begin{equation}
\label{eq:infoverQ}
\begin{aligned}
\inf_{\substack{
Q\in[0,1]\\
\mathbf{q} \in S_{\Omega}
}} 
\Bigg[
 q(\perp) \EUR_\beta \left(Q\right)
+ 
K_\alpha
D\!\left(\mbf{q} \parallel \boldsymbol{\nu}\right)
\Bigg] \\
\text{where } \boldsymbol{\nu} = (\gamma(1-Q), \gamma Q, 1-\gamma).
\end{aligned}
\end{equation}
We now discuss how to evaluate this optimisation.

\subsection{Convex Optimisation}

To compute a certified value for the achievable keyrate, ideally we would like to make use of the framework of convex optimisation, which in principle allows one to certify lower bounds on the computed value through the concept of \term{dual values}~\cite{boyd_convex_2004}. However, in order to achieve this, we need to have a convex objective function to evaluate.

We first note that the $K_\alpha D\!\left(\mbf{q} \parallel \boldsymbol{\nu}\right)$ term in~\eqref{eq:infoverQ} is convex with respect to the optimisation variables $(Q,\mbf{q})$, because KL divergence is a convex function~\cite{cover_elements_2005} and $\bsym{\nu}$ is an affine function of $Q$. Next, we observe that the expression
\begin{equation}
\label{eq:renyi entropy}
\frac{1}{1-\beta}
\log\Big(
\left(1 - Q\right)^{\beta}
+ Q^{\beta}
\Big)
\end{equation}
is simply the {\Renyi} entropy (see~\cite{tomamichel_quantum_2016} for an extensive discussion) of the probability distribution $(1-Q,Q)$, with $\beta<1$. Since {\Renyi} entropy is concave for such values of $\beta$~\cite{tomamichel_quantum_2016}, this implies that~\eqref{eq:renyi entropy} is concave (with respect to $Q$), and it follows that the $\EUR_\beta(Q)$ term in~\eqref{eq:infoverQ} is convex.

However, what appears in the minimisation in~\eqref{eq:infoverQ} is not just $\EUR_\beta\left(Q\right)$, but rather a product of the form $q(\perp) \EUR_\beta\left(Q\right)$. As we show in Appendix~\ref{app:non-convex}, this product is \emph{not} (jointly) convex in $q(\perp)$ and $Q$, thus posing an obstruction for applying convex optimisation methods to this minimisation.\footnote{In principle it might be possible that the \emph{sum} of this term and the KL divergence term $K_\alpha D\!\left(\mbf{q} \parallel \boldsymbol{\nu}\right)$ is still convex, but we leave this possibility to be resolved in future work.} 

To overcome this issue, we observe that the minimisation over $\mbf{q}$ is constrained to lie within $S_\Omega$, i.e.~the set describing the frequency distributions accepted in the protocol. 
If this set $S_\Omega$ is such that there is some constant $\qlowperp$ that lower-bounds $q(\perp)$, then we could simply replace the latter by the former in the optimisation, thereby obtaining a convex expression $\qlowperp \EUR_\beta\left(Q\right)$ in place of $q(\perp) \EUR_\beta\left(Q\right)$. Hence we shall now discuss this aspect.

\subsubsection{Designing the acceptance test}

We now return to the question of choosing the set $S_\Omega$. In Ref.~\cite{arqand_generalized_2025} (following Ref.~\cite{tomamichel_largely_2017}), it was defined as follows, writing $\mathbb{P}_{\CP}$ to denote the set of probability distributions on $\CP$:
\begin{equation}
S^\mathrm{previous}_\Omega = \{ \mathbf{q} \in \mathbb{P}_{\CP} | q(1) \leq \gamma Q_\mathrm{thresh} \},
\end{equation}
for a predetermined ``threshold'' value $Q_\mathrm{thresh}$. Qualitatively, this says that the protocol aborts when the frequency of rounds that are test rounds with ``errors'' (non-matching measurement outcomes) exceeds $\gamma Q_\mathrm{thresh}$.

However, this set $S^\mathrm{previous}_\Omega$ does not lower bound the component $q(\perp)$ in the manner we would have wanted from the preceding discussion. To allow the construction of such a lower bound, we instead use an $S_\Omega$ that has an additional abort condition, qualitatively corresponding to a low rate of generation rounds.
Note that this means that we are technically considering a slightly different protocol from Ref.~\cite{arqand_generalized_2025}, in that our protocol is slightly more likely to abort (since our acceptance criterion is stricter). However, via the subsequent analysis, we shall also construct this abort condition such that it only occurs with small probability, so we believe it is still reasonable to compare the resulting keyrates against those obtained in that work.

To begin, we observe that if we denote the number of generation rounds as $X$, it can be described by a binomial distribution~\cite{blitzstein_introduction_2014},
\begin{equation*}
    X \sim B(n, 1-\gamma).
\end{equation*}
The expectation value of $X$ is $n(1-\gamma)$. We shall now introduce a parameter $\delta$ such that the protocol aborts if $\frac{X}{n} < 1-\gamma-\delta$, which is to say we define the set $S_\Omega$ as 
\begin{equation}
\label{eq:new acceptance set}
\begin{aligned}
S_\Omega = 
& \{ \mathbf{q} \in \mathbb{P}_{\CP} | q
(1) \leq \gamma Q_\mathrm{thresh} \\
& \text{and } q
(\perp) \geq 1-\gamma-\delta\}.
\end{aligned}
\end{equation}

To ensure that the protocol is unlikely to abort from this additional condition, we shall choose $\delta$ as follows. First, pick some desired small probability $\epsperp$ (we shall use the choice $\epsperp=0.01$ in the later computations). We then define $\delta$ to be the value satisfying the following equation:
\begin{equation}
    \epsperp = \Pr_{X \sim B(n, 1-\gamma)} \left( \frac{X}{n} < 1 - \gamma -\delta \right),
\end{equation}
noting that the required $\delta$ value can be solved for by using standard computational-software functions for binomial distributions. Recalling that the number of generation rounds indeed always\footnote{For readers already familiar with QKD security definitions, we highlight that the \term{completeness} property~\cite{PR22} is, by definition, only determined by the abort probability of the honest behaviour (i.e.~without Eve), usually assumed to be IID. Our claim regarding $\epsperp$ here is slightly stronger, in that it applies even in the presence of Eve --- this is because the test/generation decision is performed using trusted randomness by Alice and Bob, and is thus always IID regardless of Eve.} follows the binomial distribution here, we see that this ensures the probability of violating the $q(\perp) \geq 1-\gamma-\delta$ condition in~\eqref{eq:new acceptance set} is at most $\epsperp$.\footnote{If desired, we can say in turn that the union bound implies that the probability of this protocol aborting is at most $\epsperp$ larger than the protocol in Ref.~\cite{arqand_generalized_2025}.} Thus we believe it is reasonable to view the introduction of this additional condition as having only a small effect on the implementation of the protocol.

Using this set-up, the lower bound $\qlowperp$ can be formulated as 
\begin{equation} \label{eq:q lower bound}
    \qlowperp = 1 - \gamma - \delta,
\end{equation}
i.e.~the value of $q(\perp)$ 
for any $\mbf{q}\in S_\Omega$ will be lower bounded by $\qlowperp$. The expression 
\begin{equation}
\label{eq:convex lower}
\begin{aligned}
&
\qlowperp
\EUR_\beta(Q) + K_\alpha
D\!\left(\mbf{q} \parallel \boldsymbol{\nu}\right)
\end{aligned}
\end{equation}
is therefore a convex lower bound on the objective function of the original minimisation~\eqref{eq:infoverQ}. 

\subsubsection{Tangent-line lower bounds}

To automatically exploit the existing algorithms and convergence guarantees for convex optimisation, we wish to implement the above optimisation in a modelling language such as CVXPY~\cite{diamond2016cvxpy,agrawal2018rewriting}, with an underlying off-the-shelf solver such as MOSEK~\cite{mosek}. However, there remains a further issue to be resolved: this typically requires the objective functions and constraints to be constructed from a given list of atomic functions in a \term{disciplined convex programming} (DCP) ruleset~\cite{liberti_disciplined_2006}. Unfortunately, the function $\EUR_\beta$ is not directly supported in such rulesets (though the KL divergence is).

We use the following approach to work around this issue: since $\EUR_\beta$ is a convex function, any tangent line to it yields a lower bound~\cite{boyd_convex_2004}, sometimes known as an \term{affine minorant}. Thus a lower bound of 
$\EUR_\beta(Q)$
can be obtained by considering a collection of tangent lines over any set $S$ of values of $Q$, namely
\begin{equation}
\label{eq:maxtangents}
\EUR_\beta(Q) \geq \sup_{Q^\star\in S} \, l_{Q^\star}(Q),
\end{equation}
where $l_{Q^\star}$ denotes the tangent line to $\EUR_\beta$ at $Q^\star$.
In other words, the pointwise maximum of these tangent lines is a valid lower bound to the original function.

The first derivative of $\EUR_\beta$ is 
\begin{equation}
\label{eq:derivative}
\EUR_\beta'(Q) = \frac{
    \beta(1-Q)^{\beta-1} - \beta Q^{\beta-1}}
    {(1-\beta)((1-Q)^\beta + Q^\beta)\ln2}.
\end{equation}
The equation of the tangent line can then be straightforwardly obtained from this derivative formula.

Furthermore, these tangent lines are valid atomic functions in CVXPY, which also supports taking their maximum under its DCP ruleset. Therefore, the keyrate computed using the RHS of~\eqref{eq:maxtangents} in place of $\EUR_\beta(Q)$ will be lower than the theoretical achievable keyrate established by (\ref{eq:keyrate}). Thus, by using convex optimisation\footnote{During the implementation of the convex optimisation, the convergence tolerance 
in CVXPY
was set to $10^{-9}$.}, any keyrate below the computed value is guaranteed to be secure and a certified achievable keyrate is found.

\subsection{Target protocol parameters}
\label{subsec:protparams}

For our subsequent demonstrations of the keyrate computations, we need to pick some example parameters we wish to achieve in the protocol. Table \ref{tab:parameter_values} shows the value we used for each of these parameters, to match those in~\cite{arqand_generalized_2025}. 

\begin{table}[ht]
\caption{Protocol parameter values used in our example computations, matching the values used in~\cite{arqand_generalized_2025}.}
\centering
\renewcommand{\arraystretch}{1.2}
\begin{tabular}{cc}
\toprule
\textbf{Parameter} & \textbf{Value} \\
\midrule
$\varepsilon_{\mathrm{secure}}$ & $10^{-10}$ \\
$Q_{\mathrm{thresh}}$ & $0.025$ \\
$\xi$ & $1.1$ \\
$\epsperp$ & $0.01$ \\
\bottomrule
\end{tabular}

\label{tab:parameter_values}
\end{table}

\begin{remark}
\label{rem:xiEC}
A caveat: while we picked the error-correction parameter $\xi=1.1$ in order to directly compare our findings to previous works, it was pointed out in~\cite{TMP+17} that when $n \lesssim 10^4$, this value may lead to impractically high abort probabilities in the error-correction step (in that the error-correction string is not long enough to allow Bob to guess Alice's string with high probability). Hence in our subsequent plots, the values in that regime should be understood under the caveat that they may correspond to significant probability of aborting during error correction --- we have indicated these regions in red in our plots. However, a more refined treatment would require more detailed simulation of the performance of such error-correcting codes, and we leave that beyond the scope of this work.
\end{remark}

\subsection{Parameters for maximizing keyrates}
\label{subsec:maxoverparams}

The above approach can thus be used to securely compute the keyrate for a given choice of $\gamma \in (0,1)$ and $\alpha \in (1,2)$ --- respectively, these are the test-round probability in the protocol, and an abstract parameter in~\eqref{eq:keyrate} describing a {\Renyi} entropy (see Appendix~\ref{app:sixstate}) used in the proof~\cite{arqand_generalized_2025}. Since \emph{any} choice of these parameters yields a valid  keyrate, we can maximize over their values to obtain a higher keyrate. To do so, we simply perform a grid search over the parameter space for these values. 

The intervals of parameter space that the grid search was performed on are $\gamma \in [10^{-2.5}, 10^0]$ and $\alpha \in [1+\frac{10^{-2}}{\sqrt{n}}, 1 + \frac{10^{2}}{\sqrt{n}}]$, on a logarithmic scale, based on some empirical observations in~\cite{arqand_generalized_2025}. We take the highest keyrate found in this process (for each $n$) as the value displayed in our subsequent plots. 

It remains to specify the grid size of this grid search, as well as the number of tangent lines to use in~\eqref{eq:maxtangents}. The grid size and number of tangent lines were decided after plotting the keyrate against each of the parameters. The grid size and number of tangent lines were chosen such that the keyrate begins to plateau at that value. This balances the computational cost and the optimality of the computed keyrate.

From Fig. \ref{fig:tangent_convergence} and \ref{fig:grid converge}, it was observed that the keyrate remains approximately constant when $n_\mathrm{tangent}\ge100$ and $n_{\mathrm{grid}}\ge50$. Thus, for subsequent calculation of the achievable keyrates, the number of tangent lines used will be $100$ and the grid size used will be $100\times100$ unless otherwise specified.

\begin{figure}[ht]
\centering
\includegraphics[width=0.4\textwidth]{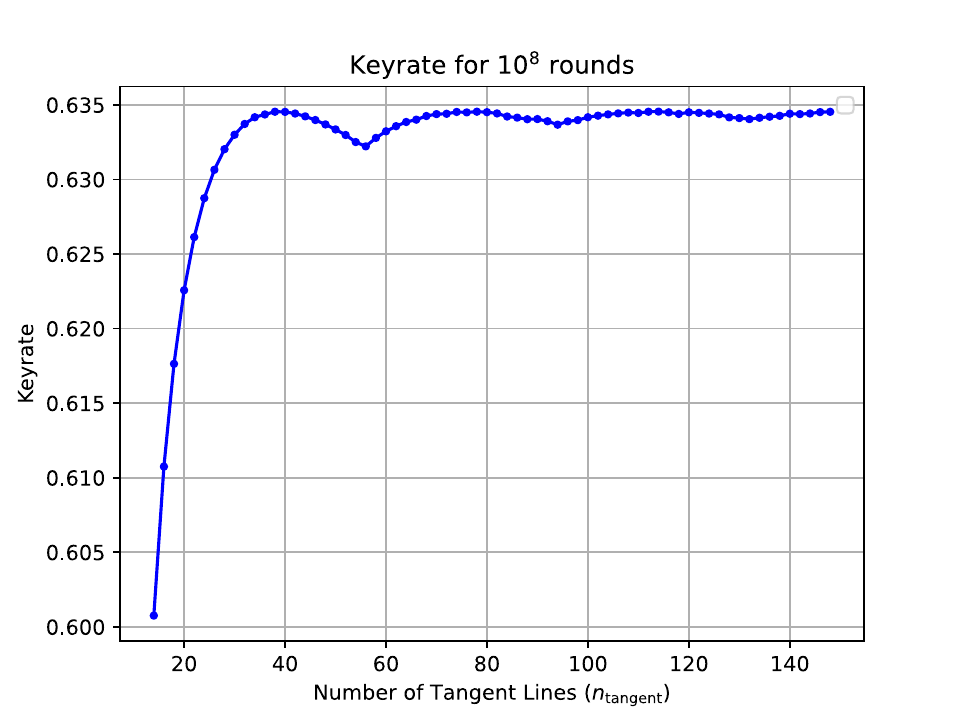} 
\caption{\textbf{Keyrate as a function of the number of tangent lines.
} The grid search used to plot this graph was $40 \times 40$. The plot is not monotone because for each $n_\mathrm{tangent}$ we took the tangent lines at uniformly spaced points over $Q\in[0,1]$, which results in the set of tangent lines for each $n_\mathrm{tangent}$ not being a subset of those for $n_\mathrm{tangent}+1$.}
\label{fig:tangent_convergence}
\end{figure}

\begin{figure}[ht]
\centering
\includegraphics[width=0.4\textwidth]{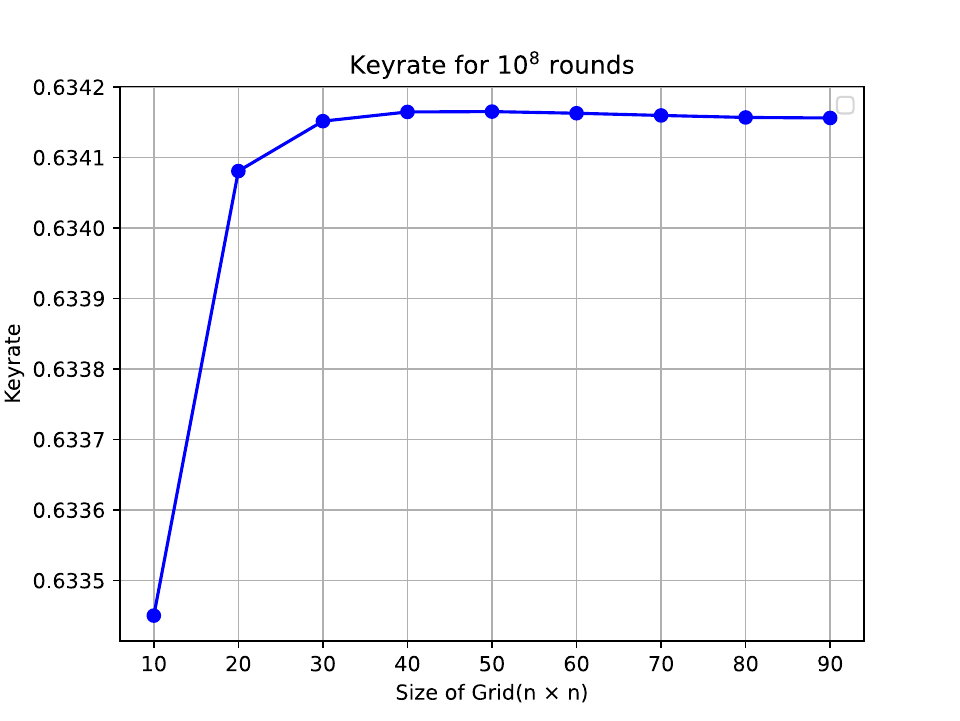} 
\caption{\textbf{Plot of keyrate value as the grid size of the grid search increases.} 100 tangent lines were used in the approximation of the convex optimisation.}
\label{fig:grid converge}
\end{figure}

\section{Results for BB84 protocol}
\label{sec:BB84results}

\subsection{Achievable Keyrate}

To provide a fair comparison with the results in~\cite{arqand_generalized_2025}, we first compute the achievable keyrate using the optimal $\alpha$ and $\gamma$ values from that work. Presumably, a certified achievable keyrate would be slightly lower than the value they computed, as the latter was based only on heuristic methods for evaluating the optimisation. From Fig. \ref{fig:keyrate compare} it was found that the achievable keyrate computed using our methodology is slightly lower than the value computed using those heuristic methods, validating that supposition. However, the difference is fairly small, showing that not much loss was incurred by rigorously certifying the value. 

\begin{figure}
\centering
\includegraphics[width=0.4\textwidth]{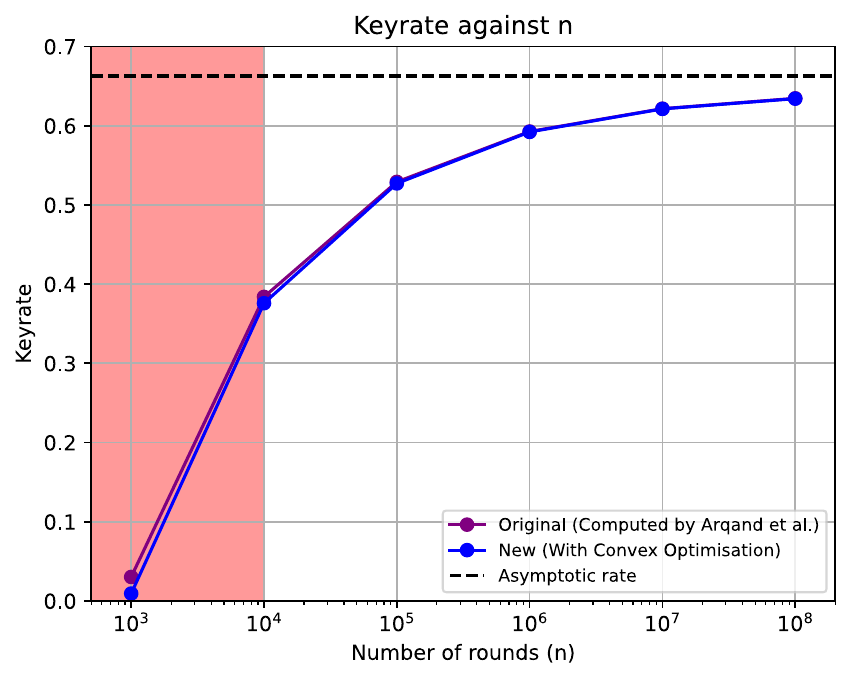} 
\caption{\textbf{Plot comparing the achievable keyrate computed by the original authors to that computed by our proposed methodology, using the values of $\alpha$ and $\gamma$ found in~\cite{arqand_generalized_2025}.} The purple graph plots the values computed by the original authors. The blue graph plots the values computed using the proposed methodology, with the values of $\alpha$ and $\gamma$ found in~\cite{arqand_generalized_2025}. The dashed black line shows the asymptotic keyrate for our parameter choices, which is $r_{\mathrm{asymp}}=0.663$. We caution however that the results displayed for $n\lesssim 10^4$ (red region in figure) may correspond to significant abort probability during error correction; see Remark~\ref{rem:xiEC}.}
\label{fig:keyrate compare}
\end{figure}

Next, to consider potential further improvements on the achievable keyrate, the optimal $\alpha$ and $\gamma$ was reevaluated using a $100\times100$ grid search. The achievable keyrate showed no significant increase from the results in Fig. \ref{fig:keyrate compare}. This observation suggests that~\cite{arqand_generalized_2025} already used a sufficiently fine grid to find nearly optimal values of $\gamma$ and $\alpha$.

To further improve on the achievable keyrate, 
we next computed the keyrates resulting from using the improved values of $\beta$ and $K_\alpha$ in~(\ref{eq:new parametrization}). There was a significant improvement in the keyrate at $n=10^3$ in comparison to when (\ref{eq:original parametrization}) was used as seen in Fig. \ref{fig:keyrate compare final}. 
However, the improvement at larger values of $n$ was fairly small. This indicates that this improved dependency on the {\Renyi} parameter $\alpha$ is mostly only significant at smaller values of $n$, in similar regimes to what was observed in previous work~\cite{ANB24}.

\begin{figure}
\centering
\includegraphics[width=0.4\textwidth]{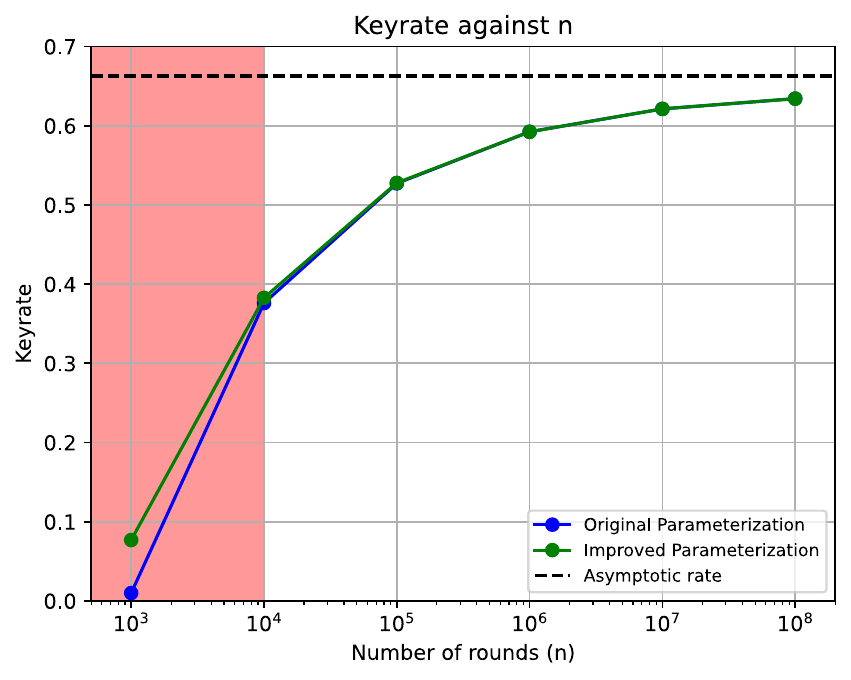} 
\caption{\textbf{Plot comparing the achievable keyrate computed by our proposed methodology, including the grid search to find the optimal value of $\gamma$ and $\alpha$, using different parametrisation of $\beta$ and $K_\alpha$.} The blue graph plots the values computed using (\ref{eq:original parametrization}). The green graph plots the values computed using (\ref{eq:new parametrization}). The dashed black line shows the asymptotic keyrate for our parameter choices, which is $r_{\mathrm{asymp}}=0.663$.  We caution however that the results displayed for $n\lesssim 10^4$ (red region in figure) may correspond to significant abort probability during error correction; see Remark~\ref{rem:xiEC}.}
\label{fig:keyrate compare final}
\end{figure}

Although our methodology was able to compute an achievable keyrate comparable to the original authors, we note that the evaluation of the convex optimisation problem sometimes encountered convergence issues. In particular, during the grid search, the convex optimisation was unable to converge for a significant fraction of the values of $\alpha$ and $\gamma$. Our keyrate plots were obtained by only considering $\alpha,\gamma$ values in which the optimisation converged successfully, thereby ensuring the keyrates were indeed rigorously certified up to solver accuracy; there were enough such points that we were still able to obtain reasonable keyrates in this fashion. Presumably, the achievable keyrate might increase slightly if the numerical stability could be improved to allow stable evaluation of more points in the grid.

\subsection{Optimal Parameters}

\begin{figure}[ht]
\centering
\includegraphics[width=0.4\textwidth]{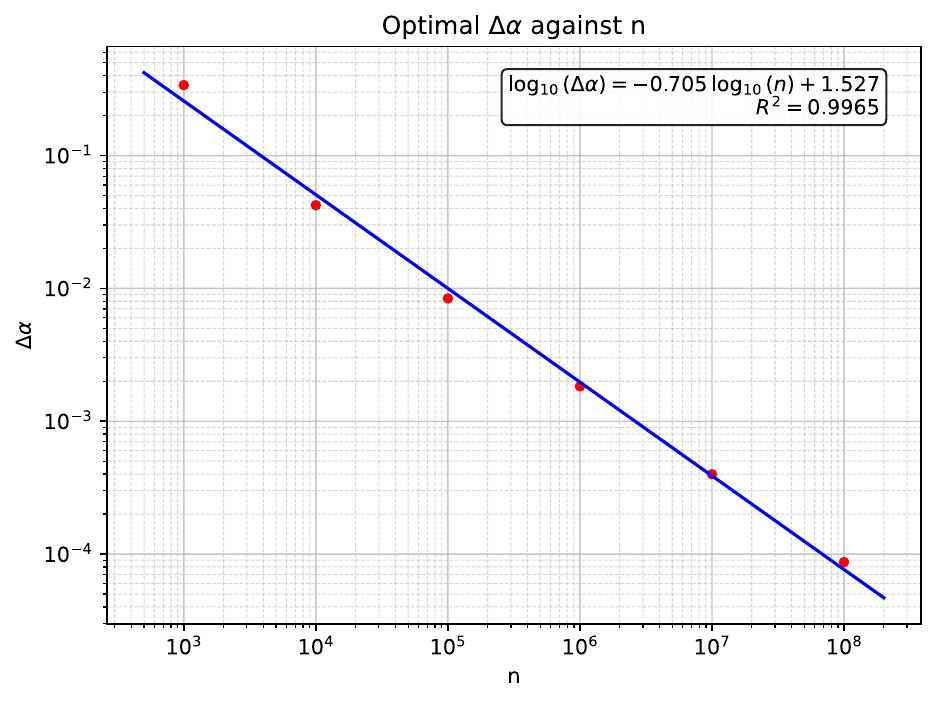} 
\caption{\textbf{Plot of the optimal $\Delta\alpha$ against the number of rounds.} Linear regression was performed on the values and a $R^2$-value of 0.9965 was found.}
\label{fig:optimal alpha}
\end{figure}

\begin{figure}
\centering
\includegraphics[width=0.4\textwidth]{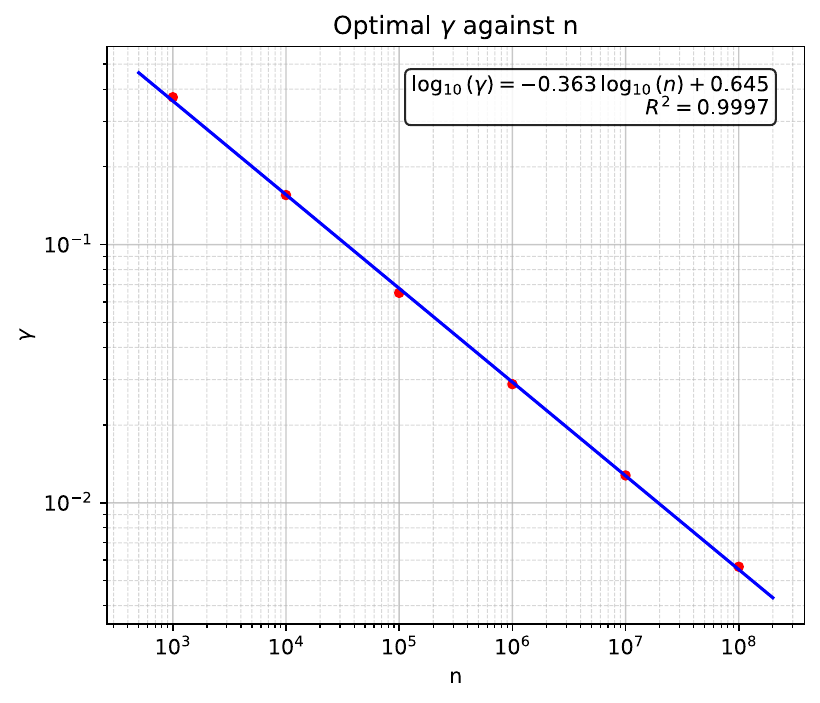} 
\caption{\textbf{Plot of the optimal $\gamma$ against the number of rounds.} Linear regression was performed on the values and a $R^2$-value of 0.9997 was found.}
\label{fig:optimal gamma}
\end{figure}

For the $\beta$ and $K_\alpha$ values in~(\ref{eq:new parametrization}), we plot the optimal $\alpha$ and $\gamma$ values we found in Fig. \ref{fig:optimal alpha} and \ref{fig:optimal gamma}. On a logarithmic scale, a strong linear trend was observed, with the best-fit lines as follows (defining $\Delta \alpha \coloneqq \alpha-1$): 

\begin{equation} 
\begin{aligned}
\log_{10}(\Delta\alpha) = -0.705 \log_{10}n + 1.527, \\
\log_{10}\gamma = -0.363 \cdot \log_{10}n + 0.645.
\end{aligned}
\end{equation}

Note that this scaling for $\gamma$ and $\alpha$ is based purely on empirical evidence. Despite the lack of mathematical basis, the establishment of this linear relationship should be beneficial for future implementation. As performing a grid search over the parameter space can be computationally expensive, this empirical trend can guide estimations of the optimal parameter. The range of the grid search can therefore be tightened, reducing the computational time of the grid search. 

To improve on our results, a possible goal for future work could be to derive these observed scaling factors rigorously, to provide stronger mathematical basis to the optimal scaling law. Such analysis has been performed based on other keyrate formulas for this protocol~\cite{hayashi_optimum_2022}. Thus, a similar approach could potentially be applied to (\ref{eq:keyrate}).

\section{Analysis of six-state protocol}
\label{sec:sixstateanalysis}

\newcommand{\instate}{\rho}
\newcommand{\outstate}{\nu}
\newcommand{\HSU}{\tilde{H}^\uparrow} 
\newcommand{\HSD}{\tilde{H}^\downarrow} 
\newcommand{\HPU}{\overline{H}^\uparrow} 
\newcommand{\ann}{I} 

\subsection{Protocol description}

We now turn to considering the six-state protocol.
For the purposes of this work, we consider an entanglement-based six-state protocol as follows. We highlight that there are many subtle differences in the exact definition used for ``the'' six-state protocol across different works, and hence our subsequent analysis should be understood as only being explicitly shown to hold for this specific form. 
\begin{algorithm}[H]
\caption{EB-six-state protocol outline.}
\label{prot:EB_sixstate}
\begin{algorithmic}[1]
\State For each round $j \in \{1,2,\dots,n\}$, perform the following steps:
\begin{algsubstates}
\State Alice and Bob each receive and briefly store a qubit register. Then via public communication, with probability $\gamma \in (0,1)$ (independently in each round) they jointly declare the round is a test round, and otherwise it is a generation round. Additionally, they publicly and jointly choose a classical value $\ann_j \in\{X,Y,Z\}$ uniformly at random, then Alice measures $\sigma_{\ann_j}$ and Bob measures $-\sigma_{\ann_j}$. 
\State If it is a generation round, they store the resulting values in registers $S_j$ and $\widetilde{S}_j$ for Alice and Bob respectively. Also, they jointly set a classical register $\CP_j=\perp$. 
\State If it is a test round, they publicly announce the resulting values, then jointly set a classical register $\CP_j=0$ if their outcomes matched and $\CP_j=1$ otherwise. Also, Alice sets $S_j=0$ and Bob sets $\widetilde{S}_j = \texttt{test}$.
\end{algsubstates}
\State All subsequent steps are identical to Protocol~\ref{prot:EB_BB84}, except for the choice of $\lkey$ which we shall describe below.
\end{algorithmic}
\end{algorithm}
Again, we have used $\sigma_X, \sigma_Y, \sigma_Z$ to denote the Pauli operators and corresponding qubit measurements. There is a technical implementation difference compared to Protocol~\ref{prot:EB_BB84}: for that protocol, the honest implementation was implicitly based on the Bell state $\Phi^+$ (see~\eqref{eq:bellstates} for explicit definitions of the Bell state notation), up to some noise tolerance induced by $Q_\mathrm{thresh}$. In contrast, in order for the honest implementation of Protocol~\ref{prot:EB_sixstate} to accept with high probability, it has to be based on $\Psi^-$ instead, again up to some tolerance induced by $Q_\mathrm{thresh}$. This is due to the technical point that there is no analogue of $\Psi^-$ that produces perfectly correlated instead of anticorrelated outcomes when Alice and Bob perform the same Pauli measurement. 

There are other possible conventions for the choices of states and measurements to address this difficulty --- for instance, we could instead have Bob always measure in the same basis as Alice and relabel the outcomes in some but not all bases. Still, they are all basically isomorphic up to outcome relabelling, so Protocol~\ref{prot:EB_sixstate} is the convention we shall use in this work.

\subsection{Keyrate formula}

To describe the keyrates, we shall have to make use of a quantity known as \term{sandwiched {\Renyi} entropy} $\HSU_\alpha$. This is a generalization of the von Neumann entropy, and has many useful properties --- however, we defer the definition to Appendix~\ref{app:sixstate}, as our discussion here will not require its detailed properties. 

Following the security proof in Ref.~\cite{arqand_generalized_2025}, one obtains the following formula for an achievable keyrate for the six-state protocol that we described:
\begin{equation}
\label{eq:keyrate6state}
\begin{aligned}
\lkey &= \inf_{\substack{
\nu \in \Sigma_{S\CP \ann E}\\
\mathbf{q} \in S_{\Omega}
}} 
\Bigg[
 q(\perp) \HSU_\alpha(S|\CP \ann E)_{\outstate_{|\perp}}
+ 
K_\alpha
D\!\left(\mbf{q} \parallel \boldsymbol{\nu}_{\CP}\right)
\Bigg] n 
\\
&\quad
- \lambda_{\mathrm{EC}}
- \log\!\left(\frac{1}{\varepsilon_{\mathrm{EV}}}\right)
- \frac{\alpha}{\alpha-1}
\log\!\left(\frac{1}{\varepsilon_{\mathrm{PA}}}\right)
+ 2, 
\end{aligned}
\end{equation}
where $\Sigma_{S\CP \ann E}$ is now the set of all states that could be produced on registers $S\CP \ann E$ in any single round of the protocol (omitting the round-$j$ subscripts for simplicity). Here, $\outstate_{|\perp}$ denotes conditioning the state $\nu \in \Sigma_{S\CP \ann E}$ on $\CP=\perp$, and we use $\boldsymbol{\nu}_{\CP}$ to denote the classical distribution induced by $\nu$ on $\CP$. 

The parameters and notation are defined in the same way as for~\eqref{eq:keyrate}, though for ease of presentation, when computing keyrates for the six-state protocol we only use the improved parameters in~\eqref{eq:new parametrization} (with this again being justified by Theorem~VI.1 in Ref.~\cite{arqand_generalized_2025}).
We keep the error-correction term $\lambda_{\mathrm{EC}}$ the same, under the assumption that the relevant noise model remains as depolarising noise. 

Basically, the BB84 keyrate formula in~\eqref{eq:keyrate} was obtained in Ref.~\cite{arqand_generalized_2025} by lower bounding the entropy term $\HSU_\alpha(S|\CP \ann E)$ (in the context of that protocol) with the function $\EUR_\beta$ presented in the main text. Our goal now is to obtain a similar function for the six-state protocol. As we show in Appendix \ref{app:sixstate}, for this protocol we can obtain the following lower bound on that entropy:
\begin{equation}
\label{eq:renyi used}
\begin{aligned}
\HSU_\alpha(S|\CP \ann E)_{\outstate_{|\perp}} &\geq \sixstate{\alpha}(Q) \text{ for } Q\in[0,2/3], \quad \text{where}\\
\sixstate{\alpha}(Q) &= \frac{\alpha}{1-\alpha}\log \Bigg[2^{1-\frac{1}{\alpha}}Q 
+ (1-Q)^{1-\frac{1}{\alpha}} \Bigg( \bigg( \frac{Q}{2} \bigg)^\frac{1}{\alpha} \\
&+ \bigg( 1 - \frac{3Q}{2} \bigg)^\frac{1}{\alpha} \Bigg) \Bigg] + 1,
\end{aligned}
\end{equation}
where $Q$ is related to $\boldsymbol{\nu}_{\CP}$ in the same way as in~\eqref{eq:nu_and_q}, and the restriction on its domain arises for technical reasons we discuss in Appendix \ref{app:sixstate}. With this in mind, the minimisation in~\eqref{eq:keyrate6state} can be lower bounded with
\begin{equation}
\label{eq:infoverQsixstate}
\begin{aligned}
\inf_{\substack{
Q\in[0,2/3]\\
\mathbf{q} \in S_{\Omega}
}} 
\Bigg[
 q(\perp) \sixstate{\alpha}(Q)
+ 
K_\alpha
D\!\left(\mbf{q} \parallel \boldsymbol{\nu}\right)
\Bigg] \\
\text{where } \boldsymbol{\nu} = (\gamma(1-Q), \gamma Q, 1-\gamma).
\end{aligned}
\end{equation}

\begin{remark}
\label{remark:renyis}
We obtained the bound $\sixstate{\alpha}$ by lower bounding $\HSU_\alpha$ with a different variant of {\Renyi} entropy, as we discuss in Appendix~\ref{app:sixstate}. This is the only point in our analysis that introduces looseness between~\eqref{eq:infoverQsixstate} and the original optimisation in~\eqref{eq:keyrate6state}; all other steps between those formulas were tight. 
In addition to this, we also considered approaches based on several other variants of {\Renyi} entropy and compared the resulting bounds; however, we found empirically that this version seemed to give the best performance in practice, hence we restrict our discussion here to this version.
We discuss the details further in that appendix. 
\end{remark}

As $n\to\infty$, one can take $\alpha\to 1$ at a suitable rate~\cite{arqand_generalized_2025} to obtain the standard formula for the asymptotic keyrate of the six-state protocol (as presented in e.g.~\cite{SBC+09}):
\begin{align}
r_{\mathrm{asymp}} &= 1 - Q_{\mathrm{thresh}} - (1 - Q_{\mathrm{thresh}}) h_{\mathrm{bin}}\left(\frac{1 - 3Q_{\mathrm{thresh}}/2}{1 - Q_{\mathrm{thresh}}}\right) \nonumber\\
&\qquad -h_{\mathrm{bin}}(Q_{\mathrm{thresh}}).
\end{align}

\subsection{Convex Optimisation}

We can evaluate the convex optimisation in~\eqref{eq:infoverQsixstate} using the same approach as we did for BB84, thereby obtaining a certified keyrate. To do so, we need the fact that $\sixstate{\alpha}$ is convex, which we show in Appendix \ref{app:sixstate}, and we also need its first derivative, which is as follows:
\begin{widetext}
\begin{equation}
\label{eq:first renyi diff}
\sixstate{\alpha}'(Q) = \frac{\alpha}{1-\alpha} \left(
\frac{
2^{1-\frac{1}{\alpha}}
-\left(1-\frac{1}{\alpha}\right)
(1-Q)^{-\frac{1}{\alpha}}
\left[
\left(\frac{Q}{2}\right)^{\frac{1}{\alpha}}
+
\left(1-\frac{3Q}{2}\right)^{\frac{1}{\alpha}}
\right]
+ (1-Q)^{1-\frac{1}{\alpha}}
\left[
\frac{1}{2\alpha}
\left(\frac{Q}{2}\right)^{\frac{1}{\alpha}-1}
-
\frac{3}{2\alpha}
\left(1-\frac{3Q}{2}\right)^{\frac{1}{\alpha}-1}
\right]
}{\displaystyle \Bigg(
2^{1-\frac{1}{\alpha}}Q
+
(1-Q)^{1-\frac{1}{\alpha}}
\left[
\left(\frac{Q}{2}\right)^{\frac{1}{\alpha}}
+
\left(1-\frac{3Q}{2}\right)^{\frac{1}{\alpha}}
\right] \Bigg) \ln 2
} \right)
\end{equation}
\end{widetext}

Using (\ref{eq:first renyi diff}), we can find the tangent lines of~$\sixstate{\alpha}$, and implement the resulting convex optimisation in CVXPY the same way we did for BB84.
We briefly note a technical point here: as mentioned in Remark~\ref{remark:renyis}, we in fact obtained several different bounds on $\HSU_\alpha(S|\CP \ann E)$, some of which were tighter than others. However, if we take the tangent lines at a fixed finite set of points, the resulting piecewise-linear bounds obtained by maximising over these tangent lines do not necessarily maintain the same ordering as the original bounds, as the tangent lines may intersect differently from the original functions. We discuss the details in Appendix~\ref{subsec:boundperformance}.

\section{Results for six-state protocol}
\label{sec:sixstateresults}

\subsection{Achievable Keyrate}

Using the same parameters as shown in Table \ref{tab:parameter_values}, the achievable keyrate for the six-state protocol was computed. As shown in the parameter tuning for the EB-BB84 protocol, $100$ tangent lines and a grid size of $100 \times 100$ is sufficient for the implementation and will be used for the computation of the achievable keyrate for the six-state protocol.

\begin{figure}[ht]
\centering
\includegraphics[width=0.4\textwidth]{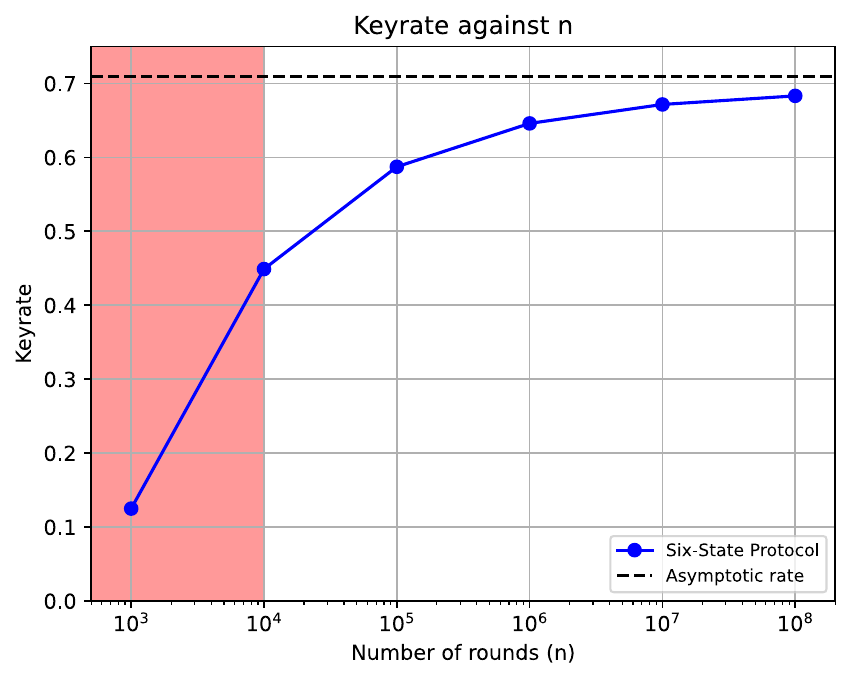} 
\caption{\textbf{Plot of the achievable keyrate computed by our proposed methodology, for the six-state protocol.} The blue graph plots the achievable keyrate of the protocol. 
The dashed black line shows the asymptotic keyrate for our parameter choices, which is $r_{\mathrm{asymp}}=0.710$.  We caution however that the results displayed for $n\lesssim 10^4$ (red region in figure) may correspond to significant abort probability during error correction; see Remark~\ref{rem:xiEC}.
}
\label{fig:keyrate sixstate}
\end{figure}

The keyrates for the six-state protocol are higher than those for BB84, as expected \cite{renner_security_2008}.

\subsection{Optimal Parameters}

\begin{figure}[ht]
\centering
\includegraphics[width=0.4\textwidth]{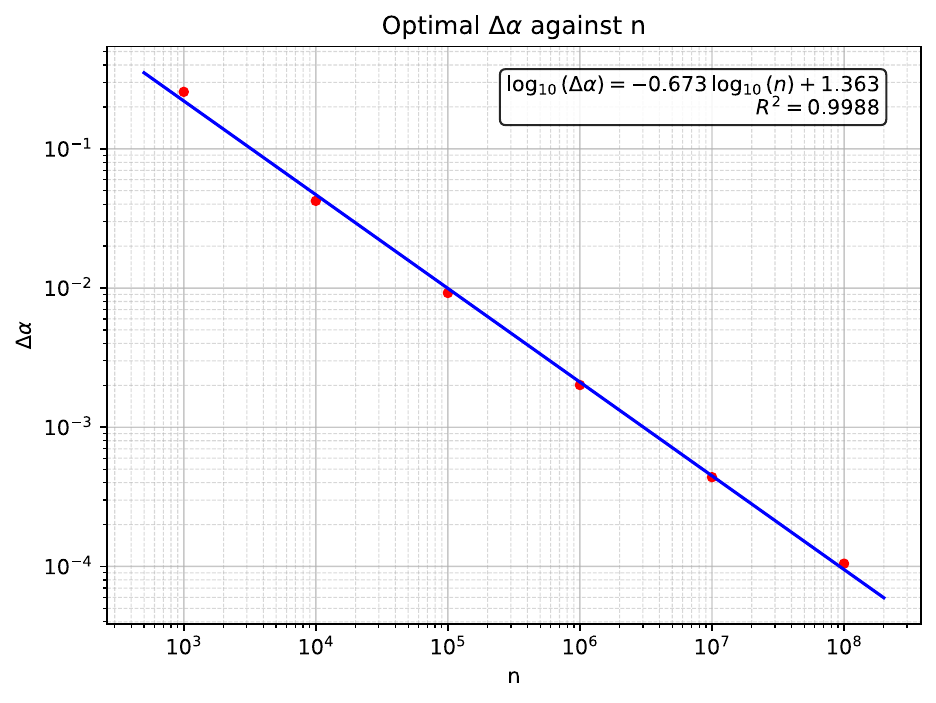} 
\caption{\textbf{Plot of the optimal $\Delta\alpha$ against the number of rounds.} Linear regression was performed on the values and a $R^2$-value of 0.9988 was found.}
\label{fig:optimal alpha ss}
\end{figure}

\begin{figure}[ht]
\centering
\includegraphics[width=0.4\textwidth]{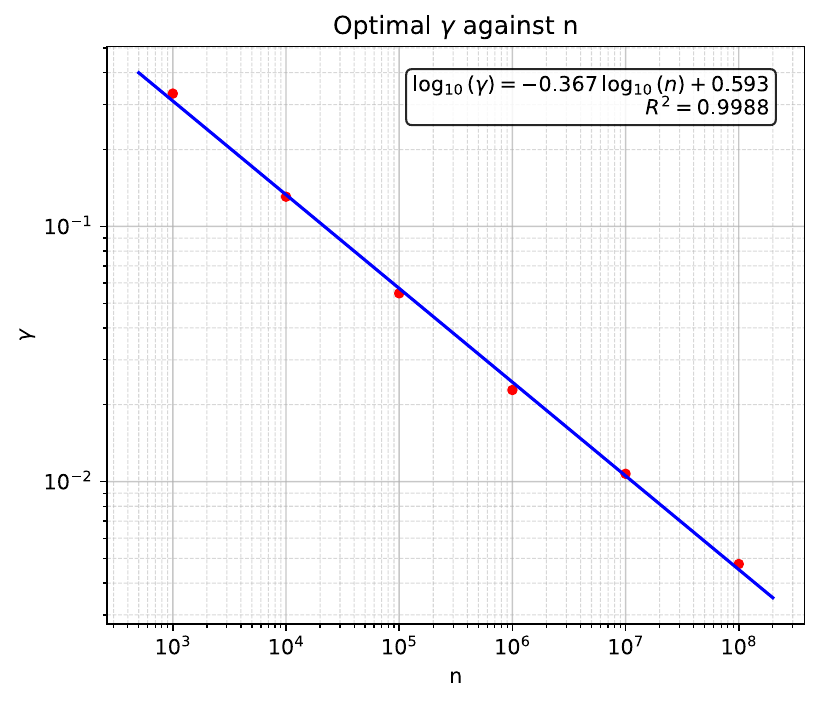} 
\caption{\textbf{Plot of the optimal $\gamma$ against the number of rounds.} Linear regression was performed on the values and a 
$R^2$-value 
of 0.9988 was found.}
\label{fig:optimal gamma ss}
\end{figure}

When the optimal logarithmic values of $\alpha$ and $\gamma$ were plotted against the number of rounds for the six-state protocol, a strong linear trend was similarly found.
The following are the equations of the best fit lines for the optimal $\alpha$ and $\gamma$ parameters respectively in the six-state protocol, defining $\Delta \alpha \coloneqq \alpha-1$:

\begin{equation} 
\begin{aligned}
     \log_{10}(\Delta\alpha) = -0.673 \log_{10}n + 1.363, \\
     \log_{10}\gamma = -0.367 \cdot \log_{10}n + 0.593.
\end{aligned}
\end{equation}

Despite the BB84 and six-state protocols being two different protocols, the equations of the best fit lines for the optimal $\alpha$ and $\gamma$ are similar.

\section{Conclusion}
\label{sec:conclusion}

This work outlines a methodology to compute a certified finite-size keyrate of basic QKD protocols using off-the-shelf convex optimisation solvers. By constructing suitable tangent lines to lower bound the original objective function, the optimisation can be formulated as a convex problem that off-the-shelf solvers can handle. This provides an alternative to the sophisticated optimisation framework developed in recent years for analysing a broad range of QKD protocols.

We first applied this methodology to the EB-BB84 protocol to compute its achievable keyrate. However, it was found that the first term in (\ref{eq:infoverQ}) was not convex. Thus, a convex lower bound was derived prior to the implementation of convex optimisation. To maximise the achievable keyrate, a grid search was also performed on the parameter space of $\gamma$ and $\alpha$ to determine the optimal choices of these parameters.
Using these parameters, the achievable keyrate for EB-BB84 was computed, showing only a slight decrease from the results in~\cite{arqand_generalized_2025} previously computed using heuristic methods. 

The achievable keyrate was further improved by adopting an improved dependency on the {\Renyi} parameter as compared to that used in~\cite{arqand_generalized_2025} (namely, (\ref{eq:new parametrization}) in place of (\ref{eq:original parametrization})). Our results demonstrate that our methodology can be used to evaluate the achievable keyrate for the EB-BB84 protocol using standard convex optimisation tools, without significantly decreasing the keyrate.

Following this, we extended the approach to the six-state protocol. To enable us to find the keyrate using the entropy accumulation framework, we needed a closed-form lower bound on the single-round {\Renyi} entropy. By analysing three different versions of {\Renyi} entropy, we derived suitable lower bounds. Based on empirical evidence, we found that \eqref{eq:sixstatebnd} yielded the best performance among the three expressions. Thus, \eqref{eq:sixstatebnd} was used to compute the achievable keyrate for the six-state protocol. The achievable keyrate across different number of rounds was found and shown in Fig. \ref{fig:keyrate sixstate}. These results form a basis for further finite-size analysis of the six-state protocol using entropy accumulation.

For both the EB-BB84 and the six-state protocol, empirical evidence shows a decreasing linear trend between the optimal parameters of $\gamma$ and $\alpha$ as the number of rounds, $n$, increased in the logarithmic scale. Additionally, the corresponding equations of the best fit line for both protocols are similar. While this behaviour is only supported by empirical evidence, a mathematical derivation may provide useful insight into the optimisation of these parameters for finite-size QKD protocols.

In conclusion, our results demonstrate that a certified finite-size keyrate can be computed using readily available convex optimisation solvers, while retaining competitive performance. The extension to the six-state protocol, together with the derived single-round {\Renyi} entropy expressions, broadens the applicability of this methodology and facilitates subsequent studies of the protocol within the entropy accumulation framework. Future work could extend our methodology to other variants of BB84, such as the decoy-state BB84 protocol \cite{Hwang03,lo_decoy_2005,MQZL05,Wang05}. \\

\noindent\textbf{Data Availability Statement:} All code and data used in this project can be found in the following \href{https://github.com/HackoAwesome/BB84keyrate}{GitHub Repository}.

\begin{acknowledgments}
We thank Hayley Lim Hui En for her contributions in the initial phase of this project.
We would also like to express our sincere gratitude towards the Special Programme in Science faculty members and our mentors, Jin Jiarui and Foo Tun Min, for their continuous support and constructive feedback for our project. This project would not have been possible without their help.
\end{acknowledgments}

\bibliography{references_bibtex}

\begin{thebibliography}{29}%
\makeatletter
\providecommand \@ifxundefined [1]{%
 \@ifx{#1\undefined}
}%
\providecommand \@ifnum [1]{%
 \ifnum #1\expandafter \@firstoftwo
 \else \expandafter \@secondoftwo
 \fi
}%
\providecommand \@ifx [1]{%
 \ifx #1\expandafter \@firstoftwo
 \else \expandafter \@secondoftwo
 \fi
}%
\providecommand \natexlab [1]{#1}%
\providecommand \enquote  [1]{``#1''}%
\providecommand \bibnamefont  [1]{#1}%
\providecommand \bibfnamefont [1]{#1}%
\providecommand \citenamefont [1]{#1}%
\providecommand \href@noop [0]{\@secondoftwo}%
\providecommand \href [0]{\begingroup \@sanitize@url \@href}%
\providecommand \@href[1]{\@@startlink{#1}\@@href}%
\providecommand \@@href[1]{\endgroup#1\@@endlink}%
\providecommand \@sanitize@url [0]{\catcode `\\12\catcode `\$12\catcode `\&12\catcode `\#12\catcode `\^12\catcode `\_12\catcode `\%12\relax}%
\providecommand \@@startlink[1]{}%
\providecommand \@@endlink[0]{}%
\providecommand \url  [0]{\begingroup\@sanitize@url \@url }%
\providecommand \@url [1]{\endgroup\@href {#1}{\urlprefix }}%
\providecommand \urlprefix  [0]{URL }%
\providecommand \Eprint [0]{\href }%
\providecommand \doibase [0]{http://dx.doi.org/}%
\providecommand \selectlanguage [0]{\@gobble}%
\providecommand \bibinfo  [0]{\@secondoftwo}%
\providecommand \bibfield  [0]{\@secondoftwo}%
\providecommand \translation [1]{[#1]}%
\providecommand \BibitemOpen [0]{}%
\providecommand \bibitemStop [0]{}%
\providecommand \bibitemNoStop [0]{.\EOS\space}%
\providecommand \EOS [0]{\spacefactor3000\relax}%
\providecommand \BibitemShut  [1]{\csname bibitem#1\endcsname}%
\let\auto@bib@innerbib\@empty
\bibitem [{\citenamefont {Arqand}\ \emph {et~al.}(2025)\citenamefont {Arqand}, \citenamefont {Hahn},\ and\ \citenamefont {Tan}}]{arqand_generalized_2025}%
  \BibitemOpen
  \bibfield  {author} {\bibinfo {author} {\bibfnamefont {A.}~\bibnamefont {Arqand}}, \bibinfo {author} {\bibfnamefont {T.~A.}\ \bibnamefont {Hahn}}, \ and\ \bibinfo {author} {\bibfnamefont {E.~Y.-Z.}\ \bibnamefont {Tan}},\ }\href {\doibase 10.1103/pgrn-mz9j} {\bibfield  {journal} {\bibinfo  {journal} {Physical Review X}\ }\textbf {\bibinfo {volume} {15}},\ \bibinfo {pages} {041013} (\bibinfo {year} {2025})}\BibitemShut {NoStop}%
\bibitem [{\citenamefont {Tomamichel}(2016)}]{tomamichel_quantum_2016}%
  \BibitemOpen
  \bibfield  {author} {\bibinfo {author} {\bibfnamefont {M.}~\bibnamefont {Tomamichel}},\ }\href {\doibase 10.1007/978-3-319-21891-5} {\emph {\bibinfo {title} {Quantum Information Processing with Finite Resources}}},\ \bibinfo {series} {{SpringerBriefs} in Mathematical Physics}, Vol.~\bibinfo {volume} {5}\ (\bibinfo  {publisher} {Springer International Publishing},\ \bibinfo {address} {Cham},\ \bibinfo {year} {2016})\BibitemShut {NoStop}%
\bibitem [{\citenamefont {He}\ \emph {et~al.}(2024)\citenamefont {He}, \citenamefont {Saunderson},\ and\ \citenamefont {Fawzi}}]{he_qics_2024}%
  \BibitemOpen
  \bibfield  {author} {\bibinfo {author} {\bibfnamefont {K.}~\bibnamefont {He}}, \bibinfo {author} {\bibfnamefont {J.}~\bibnamefont {Saunderson}}, \ and\ \bibinfo {author} {\bibfnamefont {H.}~\bibnamefont {Fawzi}},\ }\href {\doibase 10.48550/ARXIV.2410.17803} {\enquote {\bibinfo {title} {{QICS}: Quantum information conic solver},}\ } (\bibinfo {year} {2024})\BibitemShut {NoStop}%
\bibitem [{\citenamefont {He}\ \emph {et~al.}(2026)\citenamefont {He}, \citenamefont {Saunderson},\ and\ \citenamefont {Fawzi}}]{he_operator_2026}%
  \BibitemOpen
  \bibfield  {author} {\bibinfo {author} {\bibfnamefont {K.}~\bibnamefont {He}}, \bibinfo {author} {\bibfnamefont {J.}~\bibnamefont {Saunderson}}, \ and\ \bibinfo {author} {\bibfnamefont {H.}~\bibnamefont {Fawzi}},\ }\href {\doibase 10.1007/s10107-025-02314-0} {\bibfield  {journal} {\bibinfo  {journal} {Mathematical Programming}\ } (\bibinfo {year} {2026}),\ 10.1007/s10107-025-02314-0}\BibitemShut {NoStop}%
\bibitem [{\citenamefont {Kamin}\ \emph {et~al.}(2025)\citenamefont {Kamin}, \citenamefont {Burniston},\ and\ \citenamefont {Tan}}]{kamin_renyi_2025}%
  \BibitemOpen
  \bibfield  {author} {\bibinfo {author} {\bibfnamefont {L.}~\bibnamefont {Kamin}}, \bibinfo {author} {\bibfnamefont {J.}~\bibnamefont {Burniston}}, \ and\ \bibinfo {author} {\bibfnamefont {E.~Y.~Z.}\ \bibnamefont {Tan}},\ }\href {\doibase 10.48550/ARXIV.2504.12248} {\enquote {\bibinfo {title} {Rényi security framework against coherent attacks applied to decoy-state {QKD}},}\ } (\bibinfo {year} {2025})\BibitemShut {NoStop}%
\bibitem [{\citenamefont {Navarro}\ \emph {et~al.}(2025)\citenamefont {Navarro}, \citenamefont {Lorente}, \citenamefont {Parellada}, \citenamefont {Pascual-García},\ and\ \citenamefont {Araújo}}]{navarro_finite-size_2025}%
  \BibitemOpen
  \bibfield  {author} {\bibinfo {author} {\bibfnamefont {M.}~\bibnamefont {Navarro}}, \bibinfo {author} {\bibfnamefont {A.~G.}\ \bibnamefont {Lorente}}, \bibinfo {author} {\bibfnamefont {P.~V.}\ \bibnamefont {Parellada}}, \bibinfo {author} {\bibfnamefont {C.}~\bibnamefont {Pascual-García}}, \ and\ \bibinfo {author} {\bibfnamefont {M.}~\bibnamefont {Araújo}},\ }\href {\doibase 10.48550/ARXIV.2511.10584} {\enquote {\bibinfo {title} {Finite-size quantum key distribution rates from {R}ényi entropies using conic optimization},}\ } (\bibinfo {year} {2025})\BibitemShut {NoStop}%
\bibitem [{\citenamefont {Boyd}\ and\ \citenamefont {Vandenberghe}(2004)}]{boyd_convex_2004}%
  \BibitemOpen
  \bibfield  {author} {\bibinfo {author} {\bibfnamefont {S.}~\bibnamefont {Boyd}}\ and\ \bibinfo {author} {\bibfnamefont {L.}~\bibnamefont {Vandenberghe}},\ }\href {\doibase 10.1017/CBO9780511804441} {\emph {\bibinfo {title} {Convex Optimization}}},\ \bibinfo {edition} {1st}\ ed.\ (\bibinfo  {publisher} {Cambridge University Press},\ \bibinfo {year} {2004})\BibitemShut {NoStop}%
\bibitem [{\citenamefont {Anco}\ \emph {et~al.}(2024)\citenamefont {Anco}, \citenamefont {Nemoz},\ and\ \citenamefont {Brown}}]{ANB24}%
  \BibitemOpen
  \bibfield  {author} {\bibinfo {author} {\bibfnamefont {K.~G.}\ \bibnamefont {Anco}}, \bibinfo {author} {\bibfnamefont {T.}~\bibnamefont {Nemoz}}, \ and\ \bibinfo {author} {\bibfnamefont {P.}~\bibnamefont {Brown}},\ }\href {https://arxiv.org/abs/2410.16447} {\enquote {\bibinfo {title} {{How much secure randomness is in a quantum state?}}}\ } (\bibinfo {year} {2024}),\ \Eprint {http://arxiv.org/abs/2410.16447} {arXiv:2410.16447 [quant-ph]} \BibitemShut {NoStop}%
\bibitem [{\citenamefont {Tomamichel}\ and\ \citenamefont {Leverrier}(2017)}]{tomamichel_largely_2017}%
  \BibitemOpen
  \bibfield  {author} {\bibinfo {author} {\bibfnamefont {M.}~\bibnamefont {Tomamichel}}\ and\ \bibinfo {author} {\bibfnamefont {A.}~\bibnamefont {Leverrier}},\ }\href {\doibase 10.22331/q-2017-07-14-14} {\bibfield  {journal} {\bibinfo  {journal} {Quantum}\ }\textbf {\bibinfo {volume} {1}},\ \bibinfo {pages} {14} (\bibinfo {year} {2017})}\BibitemShut {NoStop}%
\bibitem [{\citenamefont {Portmann}\ and\ \citenamefont {Renner}(2022)}]{PR22}%
  \BibitemOpen
  \bibfield  {author} {\bibinfo {author} {\bibfnamefont {C.}~\bibnamefont {Portmann}}\ and\ \bibinfo {author} {\bibfnamefont {R.}~\bibnamefont {Renner}},\ }\href {\doibase 10.1103/RevModPhys.94.025008} {\bibfield  {journal} {\bibinfo  {journal} {Reviews of Modern Physics}\ }\textbf {\bibinfo {volume} {94}},\ \bibinfo {pages} {025008} (\bibinfo {year} {2022})}\BibitemShut {NoStop}%
\bibitem [{\citenamefont {Cover}\ and\ \citenamefont {Thomas}(2005)}]{cover_elements_2005}%
  \BibitemOpen
  \bibfield  {author} {\bibinfo {author} {\bibfnamefont {T.~M.}\ \bibnamefont {Cover}}\ and\ \bibinfo {author} {\bibfnamefont {J.~A.}\ \bibnamefont {Thomas}},\ }\href {\doibase 10.1002/047174882X} {\emph {\bibinfo {title} {Elements of Information Theory}}},\ \bibinfo {edition} {1st}\ ed.\ (\bibinfo  {publisher} {Wiley},\ \bibinfo {year} {2005})\BibitemShut {NoStop}%
\bibitem [{\citenamefont {Tomamichel}\ \emph {et~al.}(2017)\citenamefont {Tomamichel}, \citenamefont {Martinez-Mateo}, \citenamefont {Pacher},\ and\ \citenamefont {Elkouss}}]{TMP+17}%
  \BibitemOpen
  \bibfield  {author} {\bibinfo {author} {\bibfnamefont {M.}~\bibnamefont {Tomamichel}}, \bibinfo {author} {\bibfnamefont {J.}~\bibnamefont {Martinez-Mateo}}, \bibinfo {author} {\bibfnamefont {C.}~\bibnamefont {Pacher}}, \ and\ \bibinfo {author} {\bibfnamefont {D.}~\bibnamefont {Elkouss}},\ }\href {\doibase 10.1007/s11128-017-1709-5} {\bibfield  {journal} {\bibinfo  {journal} {Quantum Information Processing}\ }\textbf {\bibinfo {volume} {16}} (\bibinfo {year} {2017}),\ 10.1007/s11128-017-1709-5}\BibitemShut {NoStop}%
\bibitem [{\citenamefont {Scarani}\ \emph {et~al.}(2009)\citenamefont {Scarani}, \citenamefont {Bechmann-Pasquinucci}, \citenamefont {Cerf}, \citenamefont {Du\ifmmode~\check{s}\else \v{s}\fi{}ek}, \citenamefont {L\"utkenhaus},\ and\ \citenamefont {Peev}}]{SBC+09}%
  \BibitemOpen
  \bibfield  {author} {\bibinfo {author} {\bibfnamefont {V.}~\bibnamefont {Scarani}}, \bibinfo {author} {\bibfnamefont {H.}~\bibnamefont {Bechmann-Pasquinucci}}, \bibinfo {author} {\bibfnamefont {N.~J.}\ \bibnamefont {Cerf}}, \bibinfo {author} {\bibfnamefont {M.}~\bibnamefont {Du\ifmmode~\check{s}\else \v{s}\fi{}ek}}, \bibinfo {author} {\bibfnamefont {N.}~\bibnamefont {L\"utkenhaus}}, \ and\ \bibinfo {author} {\bibfnamefont {M.}~\bibnamefont {Peev}},\ }\href {\doibase 10.1103/RevModPhys.81.1301} {\bibfield  {journal} {\bibinfo  {journal} {Reviews of Modern Physics}\ }\textbf {\bibinfo {volume} {81}},\ \bibinfo {pages} {1301} (\bibinfo {year} {2009})}\BibitemShut {NoStop}%
\bibitem [{\citenamefont {Blitzstein}\ and\ \citenamefont {Hwang}(2014)}]{blitzstein_introduction_2014}%
  \BibitemOpen
  \bibfield  {author} {\bibinfo {author} {\bibfnamefont {J.~K.}\ \bibnamefont {Blitzstein}}\ and\ \bibinfo {author} {\bibfnamefont {J.}~\bibnamefont {Hwang}},\ }\href {\doibase 10.1201/b17221} {\emph {\bibinfo {title} {Introduction to Probability}}}\ (\bibinfo  {publisher} {Chapman and Hall/{CRC}},\ \bibinfo {year} {2014})\BibitemShut {NoStop}%
\bibitem [{\citenamefont {Diamond}\ and\ \citenamefont {Boyd}(2016)}]{diamond2016cvxpy}%
  \BibitemOpen
  \bibfield  {author} {\bibinfo {author} {\bibfnamefont {S.}~\bibnamefont {Diamond}}\ and\ \bibinfo {author} {\bibfnamefont {S.}~\bibnamefont {Boyd}},\ }\href@noop {} {\bibfield  {journal} {\bibinfo  {journal} {Journal of Machine Learning Research}\ }\textbf {\bibinfo {volume} {17}},\ \bibinfo {pages} {1} (\bibinfo {year} {2016})}\BibitemShut {NoStop}%
\bibitem [{\citenamefont {Agrawal}\ \emph {et~al.}(2018)\citenamefont {Agrawal}, \citenamefont {Verschueren}, \citenamefont {Diamond},\ and\ \citenamefont {Boyd}}]{agrawal2018rewriting}%
  \BibitemOpen
  \bibfield  {author} {\bibinfo {author} {\bibfnamefont {A.}~\bibnamefont {Agrawal}}, \bibinfo {author} {\bibfnamefont {R.}~\bibnamefont {Verschueren}}, \bibinfo {author} {\bibfnamefont {S.}~\bibnamefont {Diamond}}, \ and\ \bibinfo {author} {\bibfnamefont {S.}~\bibnamefont {Boyd}},\ }\href@noop {} {\bibfield  {journal} {\bibinfo  {journal} {Journal of Control and Decision}\ }\textbf {\bibinfo {volume} {5}},\ \bibinfo {pages} {42} (\bibinfo {year} {2018})}\BibitemShut {NoStop}%
\bibitem [{\citenamefont {ApS}(2025)}]{mosek}%
  \BibitemOpen
  \bibfield  {author} {\bibinfo {author} {\bibfnamefont {M.}~\bibnamefont {ApS}},\ }\href {https://docs.mosek.com/latest/pythonfusion/index.html} {\emph {\bibinfo {title} {The MOSEK Python Fusion API manual. Version 11.1.}}} (\bibinfo {year} {2025})\BibitemShut {NoStop}%
\bibitem [{\citenamefont {Grant}\ \emph {et~al.}(2006)\citenamefont {Grant}, \citenamefont {Boyd},\ and\ \citenamefont {Ye}}]{liberti_disciplined_2006}%
  \BibitemOpen
  \bibfield  {author} {\bibinfo {author} {\bibfnamefont {M.}~\bibnamefont {Grant}}, \bibinfo {author} {\bibfnamefont {S.}~\bibnamefont {Boyd}}, \ and\ \bibinfo {author} {\bibfnamefont {Y.}~\bibnamefont {Ye}},\ }in\ \href {\doibase 10.1007/0-387-30528-9_7} {\emph {\bibinfo {booktitle} {Global Optimization}}},\ Vol.~\bibinfo {volume} {84},\ \bibinfo {editor} {edited by\ \bibinfo {editor} {\bibfnamefont {L.}~\bibnamefont {Liberti}}\ and\ \bibinfo {editor} {\bibfnamefont {N.}~\bibnamefont {Maculan}}}\ (\bibinfo  {publisher} {Kluwer Academic Publishers},\ \bibinfo {address} {Boston},\ \bibinfo {year} {2006})\ pp.\ \bibinfo {pages} {155--210}\BibitemShut {NoStop}%
\bibitem [{\citenamefont {Hayashi}(2022)}]{hayashi_optimum_2022}%
  \BibitemOpen
  \bibfield  {author} {\bibinfo {author} {\bibfnamefont {M.}~\bibnamefont {Hayashi}},\ }\href {\doibase 10.1103/PhysRevA.105.042603} {\bibfield  {journal} {\bibinfo  {journal} {Physical Review A}\ }\textbf {\bibinfo {volume} {105}},\ \bibinfo {pages} {042603} (\bibinfo {year} {2022})}\BibitemShut {NoStop}%
\bibitem [{\citenamefont {Renner}(2008)}]{renner_security_2008}%
  \BibitemOpen
  \bibfield  {author} {\bibinfo {author} {\bibfnamefont {R.}~\bibnamefont {Renner}},\ }\href {\doibase 10.1142/S0219749908003256} {\bibfield  {journal} {\bibinfo  {journal} {International Journal of Quantum Information}\ }\textbf {\bibinfo {volume} {06}},\ \bibinfo {pages} {1} (\bibinfo {year} {2008})}\BibitemShut {NoStop}%
\bibitem [{\citenamefont {Hwang}(2003)}]{Hwang03}%
  \BibitemOpen
  \bibfield  {author} {\bibinfo {author} {\bibfnamefont {W.-Y.}\ \bibnamefont {Hwang}},\ }\href {\doibase 10.1103/PhysRevLett.91.057901} {\bibfield  {journal} {\bibinfo  {journal} {Physical Review Letters}\ }\textbf {\bibinfo {volume} {91}},\ \bibinfo {pages} {057901} (\bibinfo {year} {2003})}\BibitemShut {NoStop}%
\bibitem [{\citenamefont {Lo}\ \emph {et~al.}(2005)\citenamefont {Lo}, \citenamefont {Ma},\ and\ \citenamefont {Chen}}]{lo_decoy_2005}%
  \BibitemOpen
  \bibfield  {author} {\bibinfo {author} {\bibfnamefont {H.-K.}\ \bibnamefont {Lo}}, \bibinfo {author} {\bibfnamefont {X.}~\bibnamefont {Ma}}, \ and\ \bibinfo {author} {\bibfnamefont {K.}~\bibnamefont {Chen}},\ }\href {\doibase 10.1103/PhysRevLett.94.230504} {\bibfield  {journal} {\bibinfo  {journal} {Physical Review Letters}\ }\textbf {\bibinfo {volume} {94}},\ \bibinfo {pages} {230504} (\bibinfo {year} {2005})}\BibitemShut {NoStop}%
\bibitem [{\citenamefont {Ma}\ \emph {et~al.}(2005)\citenamefont {Ma}, \citenamefont {Qi}, \citenamefont {Zhao},\ and\ \citenamefont {Lo}}]{MQZL05}%
  \BibitemOpen
  \bibfield  {author} {\bibinfo {author} {\bibfnamefont {X.}~\bibnamefont {Ma}}, \bibinfo {author} {\bibfnamefont {B.}~\bibnamefont {Qi}}, \bibinfo {author} {\bibfnamefont {Y.}~\bibnamefont {Zhao}}, \ and\ \bibinfo {author} {\bibfnamefont {H.-K.}\ \bibnamefont {Lo}},\ }\href {\doibase 10.1103/PhysRevA.72.012326} {\bibfield  {journal} {\bibinfo  {journal} {Physical Review A}\ }\textbf {\bibinfo {volume} {72}},\ \bibinfo {pages} {012326} (\bibinfo {year} {2005})}\BibitemShut {NoStop}%
\bibitem [{\citenamefont {Wang}(2005)}]{Wang05}%
  \BibitemOpen
  \bibfield  {author} {\bibinfo {author} {\bibfnamefont {X.-B.}\ \bibnamefont {Wang}},\ }\href {\doibase 10.1103/PhysRevLett.94.230503} {\bibfield  {journal} {\bibinfo  {journal} {Physical Review Letters}\ }\textbf {\bibinfo {volume} {94}},\ \bibinfo {pages} {230503} (\bibinfo {year} {2005})}\BibitemShut {NoStop}%
\bibitem [{\citenamefont {DiVincenzo}\ \emph {et~al.}(2002)\citenamefont {DiVincenzo}, \citenamefont {Leung},\ and\ \citenamefont {Terhal}}]{DLT02}%
  \BibitemOpen
  \bibfield  {author} {\bibinfo {author} {\bibfnamefont {D.}~\bibnamefont {DiVincenzo}}, \bibinfo {author} {\bibfnamefont {D.}~\bibnamefont {Leung}}, \ and\ \bibinfo {author} {\bibfnamefont {B.}~\bibnamefont {Terhal}},\ }\href {\doibase 10.1109/18.985948} {\bibfield  {journal} {\bibinfo  {journal} {IEEE Transactions on Information Theory}\ }\textbf {\bibinfo {volume} {48}},\ \bibinfo {pages} {580} (\bibinfo {year} {2002})}\BibitemShut {NoStop}%
\bibitem [{\citenamefont {Frank}\ and\ \citenamefont {Lieb}(2013)}]{FL13}%
  \BibitemOpen
  \bibfield  {author} {\bibinfo {author} {\bibfnamefont {R.~L.}\ \bibnamefont {Frank}}\ and\ \bibinfo {author} {\bibfnamefont {E.~H.}\ \bibnamefont {Lieb}},\ }\href {\doibase 10.1063/1.4838835} {\bibfield  {journal} {\bibinfo  {journal} {Journal of Mathematical Physics}\ }\textbf {\bibinfo {volume} {54}} (\bibinfo {year} {2013}),\ 10.1063/1.4838835}\BibitemShut {NoStop}%
\bibitem [{\citenamefont {Dupuis}\ \emph {et~al.}(2020)\citenamefont {Dupuis}, \citenamefont {Fawzi},\ and\ \citenamefont {Renner}}]{DFR20}%
  \BibitemOpen
  \bibfield  {author} {\bibinfo {author} {\bibfnamefont {F.}~\bibnamefont {Dupuis}}, \bibinfo {author} {\bibfnamefont {O.}~\bibnamefont {Fawzi}}, \ and\ \bibinfo {author} {\bibfnamefont {R.}~\bibnamefont {Renner}},\ }\href {\doibase 10.1007/s00220-020-03839-5} {\bibfield  {journal} {\bibinfo  {journal} {Communications in Mathematical Physics}\ }\textbf {\bibinfo {volume} {379}},\ \bibinfo {pages} {867} (\bibinfo {year} {2020})}\BibitemShut {NoStop}%
\bibitem [{\citenamefont {Pironio}\ \emph {et~al.}(2009)\citenamefont {Pironio}, \citenamefont {Ac\'in}, \citenamefont {Brunner}, \citenamefont {Gisin}, \citenamefont {Massar},\ and\ \citenamefont {Scarani}}]{PAB+09}%
  \BibitemOpen
  \bibfield  {author} {\bibinfo {author} {\bibfnamefont {S.}~\bibnamefont {Pironio}}, \bibinfo {author} {\bibfnamefont {A.}~\bibnamefont {Ac\'in}}, \bibinfo {author} {\bibfnamefont {N.}~\bibnamefont {Brunner}}, \bibinfo {author} {\bibfnamefont {N.}~\bibnamefont {Gisin}}, \bibinfo {author} {\bibfnamefont {S.}~\bibnamefont {Massar}}, \ and\ \bibinfo {author} {\bibfnamefont {V.}~\bibnamefont {Scarani}},\ }\href {\doibase 10.1088/1367-2630/11/4/045021} {\bibfield  {journal} {\bibinfo  {journal} {New Journal of Physics}\ }\textbf {\bibinfo {volume} {11}},\ \bibinfo {pages} {045021} (\bibinfo {year} {2009})}\BibitemShut {NoStop}%
\bibitem [{\citenamefont {Hahn}\ \emph {et~al.}(2025)\citenamefont {Hahn}, \citenamefont {Philip}, \citenamefont {Tan},\ and\ \citenamefont {Brown}}]{hahn_analytic_2025}%
  \BibitemOpen
  \bibfield  {author} {\bibinfo {author} {\bibfnamefont {T.~A.}\ \bibnamefont {Hahn}}, \bibinfo {author} {\bibfnamefont {A.}~\bibnamefont {Philip}}, \bibinfo {author} {\bibfnamefont {E.~Y.~Z.}\ \bibnamefont {Tan}}, \ and\ \bibinfo {author} {\bibfnamefont {P.}~\bibnamefont {Brown}},\ }\href {\doibase 10.48550/ARXIV.2507.07365} {\enquote {\bibinfo {title} {Analytic {R}ényi entropy bounds for device-independent cryptography},}\ } (\bibinfo {year} {2025})\BibitemShut {NoStop}%
\end{thebibliography}%

\cleardoublepage
\onecolumngrid

\appendix

\section{Proof of non-convexity}
\label{app:non-convex}

\begin{lemma}
\label{product_lemma}
Let $f(x)=ax+b$ with $a\neq 0$, and let $g:I\to\mathbb{R}$ (for some interval $I\subseteq\mathbb{R}$) be a twice-differentiable function such that there is an interior point $y_0$ of $I$ where $g'(y_0)\neq0$. Then
\begin{equation}
h(x,y)=f(x)g(y)
\end{equation}
is not jointly convex in $(x,y)$.
\end{lemma}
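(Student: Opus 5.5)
We prove the lemma by showing that the Hessian of $h$ fails to be positive semidefinite at a suitable interior point. For a twice-differentiable function on an open convex set, joint convexity is equivalent to the Hessian being positive semidefinite everywhere. So it suffices to exhibit a point in the interior of $\mathbb{R}\times I$ where the Hessian has a negative eigenvalue, or equivalently a negative determinant.

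We compute the second derivatives of $h(x,y)=(ax+b)g(y)$:
\begin{equation*}
\partial_x^2 h = 0,\qquad \partial_x\partial_y h = a\,g'(y),\qquad \partial_y^2 h = (ax+b)\,g''(y).
\end{equation*}
The Hessian is therefore
\begin{equation*}
\nabla^2 h(x,y)=\begin{pmatrix} 0 & a g'(y)\\ a g'(y) & (ax+b)g''(y)\end{pmatrix},
\end{equation*}
with determinant $-a^2 g'(y)^2$. At $(x,y_0)$, for any $x\in\mathbb{R}$, this determinant is strictly negative, because $a\neq0$ and $g'(y_0)\neq0$. A symmetric $2\times 2$ matrix with negative determinant has one strictly negative eigenvalue, so the Hessian is not positive semidefinite there. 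Since $y_0$ is interior to $I$, the point $(x,y_0)$ is interior to the domain, and $h$ is not convex.

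To avoid citing the second-order characterization, one can argue directly along a line. Set $\phi(t)=h(x_0+t u, y_0+t v)$ for small $|t|$; restrictions of convex functions to lines are convex. A computation gives
\begin{equation*}
\phi''(0)=2uv\,a\,g'(y_0)+v^2 (ax_0+b)g''(y_0).
\end{equation*}
Take $v=1$ and $u=-\mathrm{sgn}(a g'(y_0))\,M$ with $M$ large. Then $\phi''(0)<0$, so $\phi$ is strictly concave near $0$. Concretely, the midpoint inequality $\phi(0)\le\tfrac12(\phi(t)+\phi(-t))$ fails for small $t$, which contradicts convexity.

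There is essentially no obstacle in this argument. The only care needed is to use the interiority of $y_0$, so that a neighbourhood of $(x,y_0)$ lies in the domain and the second-order test (or the line restriction) applies. This yields the claim.
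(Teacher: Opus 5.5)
Your proof is correct and follows the paper's argument: compute the Hessian, note its determinant $-a^2 g'(y_0)^2<0$, and conclude it is indefinite at an interior point, so $h$ cannot be convex. Your supplementary line-restriction argument is also sound, since failure of the midpoint inequality follows from the second-order Taylor expansion of $\phi$ at $t=0$; the phrase ``strictly concave near $0$'' would, however, additionally need continuity of $g''$, which the lemma does not assume.
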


\begin{proof}
Since $f(x)=ax+b$, $f''(x)=0$. Thus, the Hessian of $h(x,y)$ is
\begin{equation}
\nabla^2 h(x,y)=
\begin{pmatrix}
0 & ag'(y) \\
ag'(y) & (ax+b)g''(y) \\
\end{pmatrix}.
\end{equation}

Its determinant is
\begin{equation}
\det(\nabla^2 h)
= -\bigl(ag'(y)\bigr)^2 \le 0.
\end{equation}

By hypothesis there exists an interior point $y_0$ such that $g'(y_0)\neq 0$. Thus at $(x,y_0)$, the determinant of the Hessian is strictly negative, implying that the two eigenvalues of the Hessian have opposite signs. 
Hence the Hessian is not positive semidefinite everywhere, and $h(x,y)$ cannot be jointly convex in $(x,y)$~\cite{boyd_convex_2004}.
\end{proof}

\begin{corollary}
\label{non-convex corollary}
The expression
\begin{align}
\label{eq:productterm}
q(\perp)\Bigg(
1 - \frac{1}{1-\beta}
\log\Big(
\left(1 - Q\right)^{\beta}
+ Q^{\beta}
\Big)
\Bigg)
\end{align}
is not jointly convex in $(q(\perp),Q)$.
\end{corollary}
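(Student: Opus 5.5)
The plan is to apply Lemma~\ref{product_lemma} directly, with $x = q(\perp)$ and $y = Q$. Take $f(x) = x$, so $a = 1 \neq 0$ and $b = 0$, and take $g(Q) = \EUR_\beta(Q) = 1 - \frac{1}{1-\beta}\log\big((1-Q)^\beta + Q^\beta\big)$ on $I = [0,1]$, with $\beta \in (0,1)$. Both parametrisations \eqref{eq:original parametrization} and \eqref{eq:new parametrization} give such a $\beta$ for $\alpha \in (1,2)$. Then \eqref{eq:productterm} is exactly $h(x,y) = f(x)g(y)$, and it remains only to check the hypotheses on $g$.

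First, twice-differentiability. On the interior $(0,1)$, the quantity $(1-Q)^\beta + Q^\beta$ is smooth and strictly positive, so $g$ is $C^\infty$ there. The endpoints are delicate, since $Q^{\beta-1}$ blows up at $Q=0$. To avoid this, I will apply the lemma on a closed subinterval such as $I = [1/8, 1/2]$, or simply note that the lemma's Hessian argument only uses the point $y_0$ and its neighbourhood. Non-convexity on a restricted domain implies non-convexity on the full domain, because convexity is inherited by restriction to convex subsets. Concretely, if $h$ were jointly convex on $\mathbb{R}_{\ge 0}\times[0,1]$ (or whatever domain is intended for $q(\perp)$), it would be convex on the subset $(0,1)\times I$. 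The lemma rules this out.

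Second, a point with $g'(y_0) \neq 0$. I will use the explicit derivative \eqref{eq:derivative}. The denominator is positive, and the numerator $\beta\big((1-Q)^{\beta-1} - Q^{\beta-1}\big)$ is nonzero whenever $Q \neq 1/2$. Because $\beta - 1 < 0$, the map $t \mapsto t^{\beta-1}$ is strictly decreasing, so for $Q \in (0,1/2)$ we have $(1-Q)^{\beta-1} < Q^{\beta-1}$. For example, $y_0 = 1/4$ gives $g'(1/4) < 0$. The lemma then yields a point where the Hessian determinant is $-(g'(y_0))^2 < 0$, hence an indefinite Hessian and no joint convexity.

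The only step needing care is the domain and endpoint issue above. The lemma's proof uses a point $(x, y_0)$ at which the Hessian is indefinite, and we must make sure that point lies in the interior of the region where $q(\perp)$ ranges. Otherwise a local-Hessian argument does not contradict convexity. I will take $x \in (0,1)$ interior, which is valid since $q(\perp)$ ranges over $[0,1]$. The case $\beta \in (0,1)$ holds under both parametrisations, so no further case split is needed.
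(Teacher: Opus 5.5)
Your proposal is correct and follows the paper's proof: you apply Lemma~\ref{product_lemma} with $f(x)=x$ and $g=\EUR_\beta$, check twice-differentiability, and use \eqref{eq:derivative} to exhibit a point where $g'\neq 0$. Your explicit choice $y_0=1/4$ (justified by $\beta\in(0,1)$ under both parametrisations) and your restriction to an interior subinterval are more careful than the paper, which glosses over the fact that $\EUR_\beta$ is not differentiable at $Q=0,1$.
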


\begin{proof}
It is obvious that $q(\perp)$ is a nonconstant linear function. Thus, we only need to prove that 
\begin{equation}
\label{eq:gterm}
1 - \frac{1}{1-\beta}
\log\Big(
\left(1 - Q\right)^{\beta}
+ Q^{\beta}
\Big)
\end{equation}
is a twice-differentiable function,  and that there exists $Q$ such that its first derivative is nonzero.

The first derivative was computed in~\eqref{eq:derivative}, and by inspection we see it is also differentiable, so \eqref{eq:gterm} is indeed twice differentiable. Moreover, clearly~\eqref{eq:derivative} is not identically zero over $Q\in(0,1)$.
Therefore, (\ref{eq:productterm}) satisfies the conditions of Lemma~\ref{product_lemma}, and is thus not jointly convex in $(q(\perp),Q)$.
\end{proof}

\section{Entropy bound for six-state protocol}
\label{app:sixstate}

There are multiple different versions of {\Renyi} entropy, with different applications. For this work, we discuss only the following versions, following the notation in~\cite{tomamichel_quantum_2016}:
\begin{equation}
\HPU_\alpha(B|E)_\rho = \frac{\alpha}{1-\alpha} \log Tr((Tr_B(\rho_{BE}^\alpha))^{\frac{1}{\alpha}})
\end{equation}

\begin{equation}
\tilde{H}^\downarrow _\alpha(B|E)_\rho = -\tilde{D}_\alpha ( \rho_{BE} \Vert \id_B \otimes \rho_E)
\end{equation}

\begin{equation}
\tilde{H}^\uparrow _\alpha(B|E)_\rho = - \inf_{\sigma_E} \tilde{D}_\alpha ( \rho_{BE} \Vert \id_B \otimes \sigma_E)
\end{equation}
where the infimum in the last line is taken over all quantum states $\sigma_E$, and 
\begin{equation}
\tilde{D}_\alpha ( \rho \Vert \sigma) = \frac{1}{\alpha-1} \log \frac{Tr\Bigg(\bigg[\sigma^{\frac{1-\alpha}{2\alpha}}\rho\sigma^{\frac{1-\alpha}{2\alpha}} \bigg]^\alpha 
\Bigg)}{Tr(\rho)}.
\end{equation}

The keyrate formula~\eqref{eq:keyrate6state} is given in terms of $\HSU_\alpha$. However, that {\Renyi} entropy can generally be challenging to evaluate, due to the infimum over $\sigma_E$. Fortunately, it is lower bounded by the other {\Renyi} entropies $\HPU_\alpha$ and $\HSD_\alpha$~\cite{tomamichel_largely_2017}, and thus we can analyse either of them in place of $\HSU_\alpha$ to obtain a lower bound. In our subsequent analysis, we consider both of these options, as well as an attempted approach to tackle $\HSU_\alpha$ directly.

\subsection{Reduction to Werner states}

It is commonly claimed that to analyse individual rounds of the six-state protocol, without loss of generality we can suppose that before Alice and Bob's measurements they initially share a Werner state, i.e.~a state of the following form for some $Q\in[0,2/3]$ (we discuss the physical interpretation of this parameter at the end of this section):
\begin{align}
\label{eq:werner}
(1-2Q) \ketbra{\Psi^-}{\Psi^-}_{AB} + 2Q \frac{\id_{AB}}{4},
\end{align}
and Eve holds a purification of it.
However, it is worth specifying in detail precisely how it applies in a security proof based on entropy accumulation, as such arguments can have subtle differences depending on the chosen proof framework.

Consider any initial state $\instate_{ABE}$ Alice and Bob could share in a single round. Consider the following attack Eve could perform to generate a different state $\widetilde{\instate}_{ABEG}$: Eve draws a uniformly random element of the Clifford group (on one qubit) and applies it to both $A$ and $B$, and stores her choice of element in a classical register $G$, i.e.
\begin{align}
\widetilde{\instate}_{ABEG} = \sum_{g \in \mathcal{C}_1} \frac{1}{\left|\mathcal{C}_1\right|} (g \otimes g \otimes \id) \instate_{ABE} (g^\dagger \otimes g^\dagger \otimes \id) \otimes \ketbra{g}{g}_G,
\end{align}
where the summation takes place over the one-qubit Clifford group $\mathcal{C}_1$. (This operation is often referred to as a ``Clifford twirl''.)
This state has the property that the reduced state $\widetilde{\instate}_{AB}$ is always a Werner state; see e.g.~\cite[Sec.~VI.A]{DLT02} for a proof. 

Furthermore, it has two critical properties in common with the original state $\instate_{AB}$. First, it has exactly the same probability of producing matched outcomes between Alice and Bob in test rounds of the six-state protocol we described: observe that this probability is given by the expectation of the operator\footnote{The factor of $1/3$ arises from choosing the Pauli basis uniformly at random in test rounds. It is important for our analysis here that the matching-outcome probability is ``averaged'' over the Pauli bases --- the Clifford twirl does not leave the \emph{individual} probabilities for each basis unchanged.}
\begin{align}\label{eq:matchPOVM}
\Gamma_\mathrm{match} = \sum_{k\in{X,Y,Z}} \frac{1}{3} \frac{\id_{AB} + \sigma_k \otimes (-\sigma_k)}{2},
\end{align}
and this expectation is left unchanged by the Clifford twirl (essentially, each Clifford gate simply ``permutes'' the Pauli operators in this sum). 
Second, suppose we perform a uniformly random choice $\ann$ of Pauli measurement on $A$ in\footnote{Here we shall trace out $B$ from the start because it is not involved in the relevant entropies. Also, throughout this discussion we shall suppose the $A$ register is always traced out after the measurement, as it is no longer involved anywhere in the analysis.} $\widetilde{\instate}_{AEG}$ and store the outcome in a classical register $S$, calling the resulting state $\widetilde{\outstate}_{S \ann EG|\texttt{gen}}$: here we used the $\texttt{gen}$ subscript to indicate that this is what would be produced in the protocol's generation rounds (we could equivalently have conditioned on $\CP=\perp$, but this notation is more compact). Crucially, this state is \emph{exactly equal to} $\Tr_{S'\ann'}[\outstate_{S'S \ann'\ann EG|\texttt{gen}}]$ where $\outstate_{S'S \ann'\ann EG|\texttt{gen}}$ is the state obtained by simply performing a $\sigma_{\ann'}$-measurement (for uniformly random $\ann'$) on $A$ in $\instate_{AE}$ and storing the outcome in a classical register $S'$, then drawing a uniformly random element $G$ of the Clifford group and setting $S=f_{\ann, G}(S')$ and $\ann = \tilde{f}_{G}(\ann')$ for suitable bijective functions $f_{\ann, G}$ and $\tilde{f}_{G}$.
(This is an immediate consequence of the fact that any Clifford operation simply permutes the Pauli group.)
With this, the following bound holds:
\begin{align}
\HSU_\alpha(S'|\ann' E)_{\outstate_{|\texttt{gen}}} 
&= \HSU_\alpha(S'|\ann' \ann EG)_{\outstate_{|\texttt{gen}}} \nonumber\\
&= \HSU_\alpha(SS'|\ann' \ann EG)_{\outstate_{|\texttt{gen}}} \nonumber\\
&\geq \HSU_\alpha(S|\ann' \ann EG)_{\outstate_{|\texttt{gen}}} \nonumber\\
&= \HSU_\alpha(S|\ann EG)_{\outstate_{|\texttt{gen}}} \nonumber\\
&= \HSU_\alpha(S|\ann EG)_{\widetilde{\outstate}_{|\texttt{gen}}},
\end{align}
where the first line holds by data-processing\footnote{The $\leq$ direction holds since $\ann$ is a function of $\ann' G$, and $G$ is independent of $S' \ann' E$.}~\cite{FL13}, the second line holds by~\cite[Lemma~B.7]{DFR20} (since $S$ was computed from $S'\ann G$), the third line holds by~\cite[Lemma~5.3]{tomamichel_quantum_2016} (since $S'$ is classical), 
the fourth line holds by data-processing~\cite{FL13} since $\ann'$ is completely determined by $\ann G$,
and the final line holds because $\widetilde{\outstate}_{|\texttt{gen}}$ and $\outstate_{|\texttt{gen}}$  are identical on $S\ann EG$ as mentioned above.

Summarizing these two critical properties in the context of the protocol:
\begin{enumerate}
\item If it is a test round, $\instate_{ABE}$ and $\widetilde{\instate}_{ABEG}$ produce the same outcome distribution on $\CP$.
\item If it is a generation round, the $\HSU_\alpha$ entropy of Alice's measurement outcome against Eve when measuring $\instate_{ABE}$ is lower bounded by that of measuring $\widetilde{\instate}_{ABEG}$ (with Eve holding $EG$ in the latter).
\end{enumerate}
Finally, we note that any purification $\ket{\widetilde{\instate}}_{ABEGR}$ of the state $\widetilde{\instate}_{ABEG}$ is also a valid purification of the Werner state $\widetilde{\instate}_{AB}$, and the result of measuring $A$ satisfies\footnote{Here we implicitly take ${\widetilde{\outstate}}_{SB\ann EGR|\texttt{gen}}$ to mean the state after measuring $A$ in $\ket{\widetilde{\instate}}_{ABEGR}$; it maintains consistency with the earlier definition of ${\widetilde{\outstate}_{|\texttt{gen}}}$ (which was only on $SB \ann EG$) because no operations acted on $R$.}
\begin{align}
\HSU_\alpha(S|\ann EG)_{\widetilde{\outstate}_{|\texttt{gen}}} \geq \HSU_\alpha(S|\ann EGR)_{\widetilde{\outstate}_{|\texttt{gen}}},
\end{align}
by data-processing~\cite{FL13}. 

With this, we conclude that for any initial state $\instate_{ABE}$ (before the measurements), the resulting entropy $\HSU_\alpha(S|\ann E)_{{\outstate}_{|\texttt{gen}}}$ (after the generation measurement) is lower bounded by the case  where the initial state on $AB$ is a Werner state $\widetilde{\instate}_{AB}$ and Eve holds a purification, where the value of $Q$ in that Werner state satisfies
\begin{equation}
\begin{aligned}
1-Q = \Tr[\Gamma_\mathrm{match}\instate_{AB}] = \frac{\outstate_{\CP}(0)}{\gamma} 
\quad\text{and}\quad
1-Q = \Tr[\Gamma_\mathrm{match}\widetilde{\instate}_{AB}] = \frac{\widetilde{\outstate}_{\CP}(0)}{\gamma},
\end{aligned}
\end{equation}
i.e.~$1-Q$ is equal to the original state's probability of yielding matching outcomes conditioned on being a test round, and also this probability for the corresponding Werner state. By normalization and the generation-round probability, we also have 
\begin{equation}
\begin{aligned}
Q = \frac{\outstate_{\CP}(1)}{\gamma} 
\quad\text{and}\quad
Q = \frac{\widetilde{\outstate}_{\CP}(1)}{\gamma}.
\end{aligned}
\end{equation}

\begin{remark}
\label{remark:werner}
There is a subtlety here: while the above equations might appear to suggest that $Q$ could lie in the entire interval $[0,1]$, recall that the Werner-state formula~\eqref{eq:werner} is only valid for $Q\in[0,2/3]$ (outside that range, it does not give a positive semidefinite operator). This reflects an important fact that there are \emph{no} quantum states $\instate_{AB}$ (even outside of Werner states) such that $\Tr\left[\Gamma_\mathrm{match} {\instate}_{AB}\right] < 1/3$.
\end{remark}

\subsection{{\Renyi} entropy from measuring a Werner state}
\label{app:WernerRenyi}

It thus suffices to find a bound on $\HSU_\alpha(S|\ann E)_\outstate$ under the restriction that the initial state $\instate_{ABE}$ before measurements is pure and the reduced state on $AB$ is a Werner state.
Since a Werner state is completely described by the single parameter $Q$ in~\eqref{eq:werner}, we shall find this bound in terms of $Q$.
To begin, we observe a Werner state is always a Bell-diagonal state, i.e.~a state of the form
\begin{align}
\sum_{j \in \{\Phi^+, \Phi^-, \Psi^+, \Psi^-\}} \lambda_{j} \ketbra{j}{j},
\end{align}
where
\begin{align} \label{eq:bellstates}
\ket{\Phi^\pm} \coloneqq \frac{\ket{00}\pm\ket{11}}{\sqrt{2}}, \quad 
\ket{\Psi^\pm} \coloneqq \frac{\ket{01}\pm\ket{10}}{\sqrt{2}}.
\end{align}
Comparing coefficients to~\eqref{eq:werner}, we see that for the Werner state, the $\lambda_j$ values satisfy
\begin{align}
\lambda_{\Psi^-} = 1-\frac{3Q}{2}, \quad \lambda_{\Phi^+} =\lambda_{\Phi^-} =\lambda_{\Psi^+} = \frac{Q}{2}.
\end{align}

We now wish to compute the entropies after a generation measurement. For this, we first note that it suffices to compute the result for a single choice of Pauli measurement, say, $\sigma_Z$, since by symmetry of the Werner state the resulting entropies are the same for each $\sigma_{\ann}$. We proceed by following the computations that were performed in~\cite{PAB+09}. Those computations were performed for the case where instead \emph{Bob} measures his qubit and we trace out Alice's qubit; for the sake of compatibility with those formulas, we shall retain that convention here --- note that this does not affect the entropies after the measurement, due to the symmetry between Alice and Bob in this scenario.
Moreover, for compactness of notation, we shall model this $\sigma_Z$ measurement via simply a pinching channel (in the $\sigma_Z$ eigenbasis) on $B$, i.e.~this allows us to view the measurement outcome as simply being directly encoded in the post-measurement state on $B$, without having to introduce a new classical register to store the outcome and then trace out $B$.
\newcommand{\classb}{b}

With these conventions in mind, the state on $E$ conditioned on outcome $\classb$ is~\cite{PAB+09}:
\begin{equation}
    \rho_{E \mid \classb} = \ket{\psi^{+}(\classb)}\bra{\psi^{+}(\classb)} + \ket{\psi^{-}(\classb)}\bra{\psi^{-}(\classb)}
\end{equation}
where
\begin{equation}
\begin{aligned}
\ket{\psi^{+}(1)} &= \sqrt{\frac{Q}{2}} \ket{e_1} + \sqrt{\frac{Q}{2}} \ket{e_2} \\
\ket{\psi^{-}(1)} &= \sqrt{\frac{Q}{2}} \ket{e_3} - \sqrt{1-\frac{3Q}{2}} \ket{e_4} \\
\ket{\psi^{+}(-1)} &= \sqrt{\frac{Q}{2}} \ket{e_3} + \sqrt{1-\frac{3Q}{2}} \ket{e_4} \\
\ket{\psi^{-}(-1)} &= \sqrt{\frac{Q}{2}} \ket{e_1} - \sqrt{\frac{Q}{2}} \ket{e_2} \\
\end{aligned}
\end{equation}

So the state on $BE$ is
\begin{equation}
\rho_{BE} = \sum_{\classb} p(\classb) \ket{\classb}\bra{\classb} \otimes \rho_{E \mid \classb}.
\end{equation}

For the first {\Renyi} entropy definition $\HPU_\alpha(B|E)_\rho$, 
we seek to find an expression for it as a function of Q.

We first find that

\begin{equation}
\rho_{BE}^\alpha = \sum_{\classb} p(\classb)^\alpha \ket{\classb}\bra{\classb} \otimes \rho_{E \mid \classb}^\alpha
\end{equation}

Given that $p(\classb) = \frac{1}{2}$ for $\classb = 1, -1$,

\begin{equation}
\label{eq:trace b}
\begin{aligned}
Tr_B(\rho_{BE}^\alpha) &= \sum_{\classb} p(\classb)^\alpha \rho_{E \mid \classb}^\alpha \\
&= 2^{-\alpha} \sum_{\classb} \rho_{E \mid \classb}^\alpha
\end{aligned}
\end{equation}

However to evaluate (\ref{eq:trace b}), we first find that

\begin{equation}
\label{eq:sum b}
\begin{aligned}
\sum_{\classb} \rho_{E \mid \classb}^\alpha &= Q^\alpha \ket{e_1} \bra{e_1} + Q^\alpha \ket{e_2} \bra{e_2}
+ Q(1-Q)^{\alpha-1} \ket{e_3} \bra{e_3} \\
&+ 2\bigg(1 - \frac{3Q}{2}\bigg)(1-Q)^{\alpha-1} \ket{e_4} \bra{e_4}
\end{aligned}
\end{equation}

By substituting (\ref{eq:sum b}) into (\ref{eq:trace b}), we can find that

\begin{equation}
\begin{aligned}
(Tr_B(\rho_{BE}^\alpha))^{\frac{1}{\alpha}} &= 2^{\frac{1}{\alpha}-1} \bigg[ 2^{-\frac{1}{\alpha}}Q \ket{e_1} \bra{e_1} 
+ 2^{-\frac{1}{\alpha}}Q \ket{e_2} \bra{e_2} 
+ \bigg( \frac{Q}{2} \bigg)^\frac{1}{\alpha} (1-Q)^{1-\frac{1}{\alpha}} \ket{e_3} \bra{e_3} \\
&+ \bigg( 1-\frac{3Q}{2} \bigg)^\frac{1}{\alpha} 
(1 - Q)^{1-\frac{1}{\alpha}} \ket{e_4} \bra{e_4} \bigg]
\end{aligned}
\end{equation}

\begin{equation}
\begin{aligned}
Tr((Tr_B(\rho_{BE}^\alpha))^{\frac{1}{\alpha}}) &= 2^{\frac{1}{\alpha}-1} \Bigg[2^{1-\frac{1}{\alpha}}Q
+ (1-Q)^{1-\frac{1}{\alpha}} \Bigg( \bigg( \frac{Q}{2} \bigg)^\frac{1}{\alpha}
+ \bigg( 1 - \frac{3Q}{2} \bigg)^\frac{1}{\alpha} \Bigg) \Bigg]
\end{aligned}
\end{equation}

Thus, an expression for $\HPU_\alpha(B | E)_\rho$ is found,

\begin{equation}
\label{eq:first renyi}
\begin{aligned}
\HPU_\alpha(B | E)_\rho &= \frac{\alpha}{1-\alpha}\log \Bigg[2^{1-\frac{1}{\alpha}}Q 
+ (1-Q)^{1-\frac{1}{\alpha}} \Bigg( \bigg( \frac{Q}{2} \bigg)^\frac{1}{\alpha}
+ \bigg( 1 - \frac{3Q}{2} \bigg)^\frac{1}{\alpha} \Bigg) \Bigg] + 1
\end{aligned}
\end{equation}

For the second {\Renyi} entropy definition $\tilde{H}^\downarrow _\alpha(B|E)_\rho$, similarly we seek to find an expression for it as a function of Q.

We first find that

\begin{equation}
\begin{aligned}
\rho_E &= Tr_B( \rho_{BE} ) \\
&= \frac{1}{2} \sum_{\classb} \rho_{E|\classb}
\end{aligned}
\end{equation}
and
\begin{equation}
\label{eq:product with e}
\id_B \otimes \rho_E = \frac{1}{2} \id_B \otimes \bigg( \sum_{\classb} \rho_{E|\classb} \bigg)
\end{equation}

It follows from (\ref{eq:product with e}) that

\begin{equation}
\label{eq:product power}
(\id_B \otimes \rho_E)^{\frac{1-\alpha}{2 \alpha}} = 
2^{-\frac{1-\alpha}{2\alpha}} \id_B \otimes \bigg( \sum_{\classb} \rho_{E|\classb} \bigg)^{\frac{1-\alpha}{2\alpha}}
\end{equation}

To evaluate (\ref{eq:product power}), we first find that

\begin{equation}
\label{eq: sum}
\begin{aligned}
\sum_{\classb} \rho_{E|\classb} &= Q \ket{e_1} \bra{e_1} + Q \ket{e_2} \bra{e_2}
+ Q \ket{e_3} \bra{e_3} \\
&+ 2\bigg(1 - \frac{3Q}{2}\bigg) \ket{e_4} \bra{e_4}
\end{aligned}
\end{equation}
and 
\begin{equation}
\label{eq:sum power}
\begin{aligned}
\bigg( \sum_{\classb} \rho_{E|\classb} \bigg)^{\frac{1-\alpha}{2\alpha}} &= Q^{\frac{1-\alpha}{2\alpha}} \ket{e_1} \bra{e_1} + Q^{\frac{1-\alpha}{2\alpha}} \ket{e_2} \bra{e_2} 
+ Q^{\frac{1-\alpha}{2\alpha}} \ket{e_3} \bra{e_3} \\
&+ \bigg[ 2\bigg(1 - \frac{3Q}{2}\bigg) \bigg]^{\frac{1-\alpha}{2\alpha}} \ket{e_4} \bra{e_4}
\end{aligned}
\end{equation}

By substituting (\ref{eq:sum power}) into (\ref{eq:product power}), we find that

\begin{equation}
\label{eq: shortcut for third}
\begin{aligned}
(\id_B \otimes \rho_E)^{\frac{1-\alpha}{2 \alpha}} \rho_{BE} (\id_B \otimes \rho_E)^{\frac{1-\alpha}{2 \alpha}}
= 2^{-\frac{1}{\alpha}} \sum_{\classb} \ket{\classb} \bra{\classb} \otimes 
\Bigg[ \bigg( \sum_{\classb} \rho_{E|\classb} \bigg)^{\frac{1-\alpha}{2\alpha}}
\rho_{E|\classb}
\bigg( \sum_{\classb} \rho_{E|\classb} \bigg)^{\frac{1-\alpha}{2\alpha}} \Bigg]
\end{aligned}
\end{equation}

\begin{equation}
\begin{aligned}
\Bigg[ (\id_B \otimes \rho_E)^{\frac{1-\alpha}{2 \alpha}} \rho_{BE} (\id_B \otimes \rho_E)^{\frac{1-\alpha}{2 \alpha}} \Bigg]^\alpha 
= 2^{-1} \sum_{\classb} \ket{\classb} \bra{\classb} \otimes 
\Bigg[ \bigg( \sum_{\classb} \rho_{E|\classb} \bigg)^{\frac{1-\alpha}{2\alpha}}
\rho_{E|\classb}
\bigg( \sum_{\classb} \rho_{E|\classb} \bigg)^{\frac{1-\alpha}{2\alpha}} \Bigg]^\alpha
\end{aligned}
\end{equation}

\begin{equation}
\begin{aligned}
Tr\Bigg( \Bigg[ (\id_B \otimes \rho_E)^{\frac{1-\alpha}{2 \alpha}} \rho_{BE} (\id_B \otimes \rho_E)^{\frac{1-\alpha}{2 \alpha}} \Bigg]^\alpha \Bigg) 
= Q + 2^{1-\alpha} \Bigg[ \bigg(\frac{Q}{2} \bigg)^{\frac{1}{\alpha}} + \bigg(1-\frac{3Q}{2} \bigg)^{\frac{1}{\alpha}} \Bigg]^\alpha
\end{aligned}
\end{equation}

Thus, an expression for $\tilde{D}_\alpha ( \rho_{BE} \Vert \id_B \otimes \rho_E)$ is found

\begin{equation}
\begin{aligned}
\tilde{D}_\alpha ( \rho_{BE} \Vert \id_B \otimes \rho_E) &= 
\frac{1}{\alpha-1} \log \Bigg( Q
+ 2^{1-\alpha} \Bigg[ \bigg(\frac{Q}{2} \bigg)^{\frac{1}{\alpha}}
+ \bigg(1-\frac{3Q}{2} \bigg)^{\frac{1}{\alpha}} \Bigg]^\alpha \Bigg)
\end{aligned}
\end{equation}
and the expression for $\tilde{H}^\downarrow _\alpha(B|E)_\rho$ follows
\begin{equation}
\label{eq:second renyi}
\begin{aligned}
\tilde{H}^\downarrow _\alpha(B|E)_\rho &= 
\frac{1}{1 - \alpha} \log \Bigg( Q + 2^{1-\alpha} \Bigg[ \bigg(\frac{Q}{2} \bigg)^{\frac{1}{\alpha}}
+ \bigg(1-\frac{3Q}{2} \bigg)^{\frac{1}{\alpha}} \Bigg]^\alpha \Bigg)
\end{aligned}
\end{equation}

For the third {\Renyi} entropy definition $\tilde{H}^\uparrow _\alpha(B|E)_\rho$, evaluating it explicitly can be challenging due to the inner optimization. Thus, we will evaluate a lower bound for it by using the following ansatz from~\cite{hahn_analytic_2025}:

\begin{equation}
\tilde{H}^\uparrow _\alpha(B|E)_\rho
\geq - \tilde{D}_\alpha ( \rho_{BE} \Vert \id_B \otimes \tau_E)
\end{equation}
where 
\begin{equation}
\label{eq:ansatz}
\tau_E = \frac{
\rho_E^{\frac{\alpha}{2 \alpha - 1}}
}{ Tr \bigg[ \rho_E^{\frac{\alpha}{2 \alpha - 1}} \bigg]}
\end{equation}

Then (leaving factors of $\mathbb{I}$ implicit for brevity):
\begin{equation}
\begin{aligned}
- \tilde{D}_\alpha ( \rho_{BE} \Vert \id_B \otimes \tau_E)
= \frac{1}{1 - \alpha}
\log \Bigg( \frac{
Tr \Bigg( \Bigg[ \rho_E^{\frac{1-\alpha'}{2\alpha'}} \rho_{BE} \rho_E^{\frac{1-\alpha'}{2\alpha'}} \Bigg]^\alpha \Bigg)
}{
Tr \bigg( \rho_E^{\frac{1}{\alpha'}} \bigg)^{1 - \alpha}
}
\Bigg)
\end{aligned}
\end{equation}

\begin{equation}
\alpha' = 2 - \frac{1}{\alpha}
\end{equation}

By using (\ref{eq: shortcut for third}) and replacing $\alpha \rightarrow\alpha'$, we can easily attain

\begin{equation}
\begin{aligned}
( \id_B \otimes \rho_E ) ^{\frac{1-\alpha'}{2\alpha'}} \rho_{BE} (\id_B \otimes \rho_E)^{\frac{1-\alpha'}{2\alpha'}}
= 2^{-\frac{1}{\alpha'}} \sum_{\classb} \ket{\classb} \bra{\classb} \otimes 
\Bigg[ \bigg( \sum_{\classb} \rho_{E|\classb} \bigg)^{\frac{1-\alpha'}{2\alpha'}}
\rho_{E|\classb}
\bigg( \sum_{\classb} \rho_{E|\classb} \bigg)^{\frac{1-\alpha'}{2\alpha'}} \Bigg]
\end{aligned}
\end{equation}

\begin{equation}
\label{eq:second tensor power}
\begin{aligned}
\Bigg[ ( \id_B \otimes \rho_E )^{\frac{1-\alpha'}{2\alpha'}} \rho_{BE} ( \id_B \otimes \rho_E )^{\frac{1-\alpha'}{2\alpha'}} \Bigg]^\alpha
= 2^{-\frac{\alpha}{\alpha'}} \sum_{\classb} \ket{\classb} \bra{\classb} \otimes 
\Bigg[ \bigg( \sum_{\classb} \rho_{E|\classb} \bigg)^{\frac{1-\alpha'}{2\alpha'}}
\rho_{E|\classb}
\bigg( \sum_{\classb} \rho_{E|\classb} \bigg)^{\frac{1-\alpha'}{2\alpha'}} \Bigg]^\alpha
\end{aligned}
\end{equation}

By substituting (\ref{eq:sum power}) and replacing $\alpha \rightarrow \alpha'$ into (\ref{eq:second tensor power}), 

\begin{equation}
\label{eq:numerator}
\begin{aligned}
Tr \Bigg(
\Bigg[ ( \id_B \otimes \rho_E )^{\frac{1-\alpha'}{2\alpha'}} \rho_{BE} ( \id_B \otimes \rho_E )^{\frac{1-\alpha'}{2\alpha'}} \Bigg]^\alpha
\Bigg) = 2^{-\frac{\alpha}{\alpha'}} \Bigg(
2Q^\frac{\alpha}{\alpha'} + 2^{\frac{(1-\alpha')\alpha}{\alpha'} + 1} \bigg[ 
\bigg(\frac{Q}{2} \bigg) ^ \frac{1}{\alpha'} + \bigg( 1 - \frac{3Q}{2} \bigg) ^ \frac{1}{\alpha'}
\bigg] ^ \alpha
\Bigg)
\end{aligned}
\end{equation}

Next, we will evaluate

\begin{equation}
\begin{aligned}
\rho_E^\frac{1}{\alpha'} &= 2^{-\frac{1}{\alpha'}} \bigg( \sum_{\classb} \rho_{E|\classb} \bigg) ^ \frac{1}{\alpha'}
\end{aligned}
\end{equation}

By substituting (\ref{eq: sum}),

\begin{equation}
\begin{aligned}
\bigg( \sum_{\classb} \rho_{E|\classb} \bigg) ^ \frac{1}{\alpha'} &= Q^\frac{1}{\alpha'} \ket{e_1} \bra{e_1} + Q^\frac{1}{\alpha'} \ket{e_2} \bra{e_2}
+ Q^\frac{1}{\alpha'} \ket{e_3} \bra{e_3} \\
&+ 2^\frac{1}{\alpha'}
\bigg(1 - \frac{3Q}{2}\bigg)^\frac{1}{\alpha'} \ket{e_4} \bra{e_4}
\end{aligned}
\end{equation}

\begin{equation}
\begin{aligned}
Tr \bigg( \rho_E^{\frac{1}{\alpha'}} \bigg) = 2^{-\frac{1}{\alpha'}} \bigg[ 
3Q^\frac{1}{\alpha'} + 2^\frac{1}{\alpha'}\bigg( 1 - \frac{3Q}{2} \bigg)^\frac{1}{\alpha'}
\bigg]
\end{aligned}
\end{equation}

\begin{equation}
\label{eq:denominator}
\begin{aligned}
Tr \bigg( \rho_E^{\frac{1}{\alpha'}} \bigg) ^{1-\alpha}
= 2^{-\frac{1-\alpha}{\alpha'}} \bigg[ 
3Q^\frac{1}{\alpha'} + 2^\frac{1}{\alpha'}\bigg( 1 - \frac{3Q}{2} \bigg)^\frac{1}{\alpha'}
\bigg]^{1-\alpha}
\end{aligned}
\end{equation}

By combining (\ref{eq:numerator}) and (\ref{eq:denominator}),

\begin{equation}
\label{eq:third renyi}
\begin{aligned}
- \tilde{D}_\alpha ( \rho_{BE} \Vert \id_B \otimes \tau_E)
&= \frac{1}{1-\alpha} \log \Bigg(
2Q^\frac{\alpha}{\alpha'} + 2^{\frac{(1-\alpha')\alpha}{\alpha'} + 1} \bigg[ 
\bigg(\frac{Q}{2} \bigg) ^ \frac{1}{\alpha'} + \bigg( 1 - \frac{3Q}{2} \bigg) ^ \frac{1}{\alpha'}
\bigg] ^ \alpha
\Bigg) \\
&- \log \bigg( 3Q^\frac{1}{\alpha'} + 2^\frac{1}{\alpha'}\bigg( 1 - \frac{3Q}{2} \bigg)^\frac{1}{\alpha'} 
\bigg) - \frac{\alpha}{1-\alpha}
\end{aligned}
\end{equation}

Thus a lower bound for $\tilde{H}^\uparrow _\alpha(B|E)_\rho$ has been found.

Finally, we note that for our numerical methods, it is important for the bounds to be convex in $Q$. This is indeed the case for $\HPU_\alpha(B|E)$ and $\HSD_\alpha(B|E)$: it follows somewhat abstractly from the observation that~\cite[Lemma~16]{kamin_renyi_2025} implies that both these quantities in this protocol are equal to convex functions of $\instate_{AB}$. Since a Werner state $\instate_{AB}$ is an affine function of $Q$, this in turn implies that these entropies are both convex functions of $Q$.

On the other hand, it is less clear whether the bound we obtained from $\HSU_\alpha(B|E)$ is convex. This is because we did not compute its exact value (which would have allowed us to apply~\cite[Lemma~16]{kamin_renyi_2025}), but rather a lower bound on it via the ansatz $\tau_E$ in~\eqref{eq:ansatz}. We leave the convexity of that bound for future work, as our empirical observations below indicate it performs less well than the other bounds in practice.

\subsection{Comparison of bounds}
\label{subsec:boundperformance}

For our numerical computations, we in fact computed the keyrates using all three of these bounds, i.e.~(\ref{eq:first renyi}), (\ref{eq:second renyi}) and (\ref{eq:third renyi}). Empirically, we found that the first bound performed best. Therefore, in the main text we chose to express this by setting the bound $\sixstate{\alpha}(Q)$ to that expression,
\begin{align}
\label{eq:sixstatebnd}
\sixstate{\alpha}(Q) = \frac{\alpha}{1-\alpha}\log \Bigg[2^{1-\frac{1}{\alpha}}Q 
+ (1-Q)^{1-\frac{1}{\alpha}} \Bigg( \bigg( \frac{Q}{2} \bigg)^\frac{1}{\alpha}
+ \bigg( 1 - \frac{3Q}{2} \bigg)^\frac{1}{\alpha} \Bigg) \Bigg] + 1.
\end{align}

We again emphasise the point mentioned in the main text that any inequalities between the three bounds~(\ref{eq:first renyi}), (\ref{eq:second renyi}) and (\ref{eq:third renyi}) are not necessarily preserved when approximating them via a finite set of tangent lines. In fact, for $n=10^7$, (\ref{eq:second renyi}) was found to give a higher keyrate than~(\ref{eq:first renyi}), despite the 
proof we present below
that the expression (\ref{eq:second renyi}) is smaller than (\ref{eq:first renyi}). This demonstrates that, when only a finite number of affine lower bounds are considered, the lower bound obtained for a higher function need not be greater than that obtained for a smaller function.

Still, despite the above consideration, we believe it may be of interest to have a proof of some relations between the bounds. Therefore, we now present a proof that (\ref{eq:second renyi}) is smaller than (\ref{eq:first renyi}). We leave further study of the third bound (\ref{eq:third renyi}) to future work.
\begin{lemma}
\label{inequality lemma}
The following bound holds:
\begin{equation}
\begin{aligned}
&\frac{\alpha}{1-\alpha}\log \Bigg[2^{1-\frac{1}{\alpha}}Q 
+ (1-Q)^{1-\frac{1}{\alpha}} \Bigg( \bigg( \frac{Q}{2} \bigg)^\frac{1}{\alpha}
+ \bigg( 1 - \frac{3Q}{2} \bigg)^\frac{1}{\alpha} \Bigg) \Bigg] + 1 \\
&\geq
\frac{1}{1 - \alpha} \log \Bigg( Q + 2^{1-\alpha} \Bigg[ \bigg(\frac{Q}{2} \bigg)^{\frac{1}{\alpha}}
+ \bigg(1-\frac{3Q}{2} \bigg)^{\frac{1}{\alpha}} \Bigg]^\alpha \Bigg)
\end{aligned}
\end{equation}
given that $\alpha > 1$ and $0 \leq Q \leq \frac{2}{3}$.
\end{lemma}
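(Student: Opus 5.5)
The plan is to reduce the inequality to a single application of Jensen's inequality for the convex function $t\mapsto t^\alpha$ (with $\alpha>1$). Write $s\coloneqq 1/\alpha$ and
\begin{align*}
a\coloneqq \bigg(\frac{Q}{2}\bigg)^{s}+\bigg(1-\frac{3Q}{2}\bigg)^{s}, \qquad A\coloneqq 2^{1-s}Q+(1-Q)^{1-s}a, \qquad B\coloneqq Q+2^{1-\alpha}a^\alpha .
\end{align*}
Note that $A,B>0$ and $1-Q\geq 1/3>0$ on $0\le Q\le 2/3$. The claim reads $\frac{\alpha}{1-\alpha}\log A+1\geq \frac{1}{1-\alpha}\log B$.

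First, multiply through by $1-\alpha<0$, which flips the inequality to $\alpha\log A+(1-\alpha)\leq \log B$. Exponentiating (base $2$) turns this into $2^{1-\alpha}A^\alpha\le B$. Since $2^{1-\alpha}=(2^{s-1})^\alpha$, the left side equals $(2^{s-1}A)^\alpha$. Using $2^{s-1}2^{1-s}=1$, this is
\begin{align*}
\Big(Q+2^{s-1}(1-Q)^{1-s}a\Big)^\alpha .
\end{align*}

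Next, rewrite the inner expression as a convex combination. Set $x\coloneqq 2^{s-1}(1-Q)^{-s}a\ge 0$, so that $Q+2^{s-1}(1-Q)^{1-s}a = Q\cdot 1+(1-Q)\,x$ with weights $Q,1-Q\in[0,1]$. By convexity of $t^\alpha$ on $[0,\infty)$,
\begin{align*}
\big(Q\cdot 1+(1-Q)x\big)^\alpha\le Q+(1-Q)x^\alpha .
\end{align*}

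Finally, compute the right-hand side. We have $x^\alpha=2^{1-\alpha}(1-Q)^{-1}a^\alpha$ (using $s\alpha=1$), so $(1-Q)x^\alpha=2^{1-\alpha}a^\alpha$. Hence the right-hand side is exactly $Q+2^{1-\alpha}a^\alpha=B$, which closes the argument.

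The only delicate step is spotting the right normalisation: one must factor the prefactor $2^{s-1}$ into $A$ so that the $Q$ term has coefficient exactly $1$, and split off the factor $(1-Q)$ so that Jensen's inequality applies with weights $(Q,1-Q)$. Once that is done, everything is bookkeeping of exponents. The endpoint cases $Q=0$ and $Q=2/3$ are covered directly, because Jensen's inequality holds with degenerate weights and $1-Q$ stays bounded away from zero.
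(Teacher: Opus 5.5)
Your proof is correct and is essentially the paper's argument. The paper performs the same reduction, to $2^{1-\frac{1}{\alpha}}Q+(1-Q)^{1-\frac{1}{\alpha}}a\le\bigl(2^{\alpha-1}Q+a^\alpha\bigr)^{1/\alpha}$ (which is your inequality rescaled by $2^{1-\frac{1}{\alpha}}$), and then closes it with H\"older's inequality for $p=\alpha$ against the normalised sequence $\bigl(Q^{1-\frac{1}{\alpha}},(1-Q)^{1-\frac{1}{\alpha}}\bigr)$, which is exactly your Jensen step for $t\mapsto t^\alpha$ with weights $(Q,1-Q)$.
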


\begin{proof}

In this proof, we shall use H\"{o}lder's inequality in the following form: for any two sequences of numbers $a_i$ and $b_i$, we have
\begin{align}
\sum^{n}_{i=1} |a_i b_i| \leq \bigg(\sum^{n}_{i=1} |a_i|^p \bigg)^\frac{1}{p} \bigg(\sum^{n}_{i=1} |b_i|^q \bigg)^\frac{1}{q}
\end{align}
where $p$ and $q$ are real numbers satisfying, $p,q\geq1$ and $\frac{1}{p} + \frac{1}{q} = 1$.

We will first manipulate the expression into another form. Since $\alpha > 1$, $1 - \alpha < 0$, we rewrite the desired inequality as follows:
\begin{equation}
\begin{aligned}
&\alpha \log \Bigg[2^{1-\frac{1}{\alpha}}Q 
+ (1-Q)^{1-\frac{1}{\alpha}} \Bigg( \bigg( \frac{Q}{2} \bigg)^\frac{1}{\alpha}
+ \bigg( 1 - \frac{3Q}{2} \bigg)^\frac{1}{\alpha} \Bigg) \Bigg] + 1 - \alpha \\
&\leq
\log \Bigg( Q + 2^{1-\alpha} \Bigg[ \bigg(\frac{Q}{2} \bigg)^{\frac{1}{\alpha}}
+ \bigg(1-\frac{3Q}{2} \bigg)^{\frac{1}{\alpha}} \Bigg]^\alpha \Bigg)
\end{aligned}
\end{equation}

\begin{equation}
\begin{aligned}
\iff 2^{1 - \alpha}\Bigg[2^{1-\frac{1}{\alpha}}Q 
+ (1-Q)^{1-\frac{1}{\alpha}} \Bigg( \bigg( \frac{Q}{2} \bigg)^\frac{1}{\alpha}
+ \bigg( 1 - \frac{3Q}{2} \bigg)^\frac{1}{\alpha} \Bigg) \Bigg]^\alpha
\leq
Q + 2^{1-\alpha} \Bigg[ \bigg(\frac{Q}{2} \bigg)^{\frac{1}{\alpha}}
+ \bigg(1-\frac{3Q}{2} \bigg)^{\frac{1}{\alpha}} \Bigg]^\alpha
\end{aligned}
\end{equation}

\begin{equation}
\begin{aligned}
\iff  \Bigg[2^{1-\frac{1}{\alpha}}Q 
+ (1-Q)^{1-\frac{1}{\alpha}} \Bigg( \bigg( \frac{Q}{2} \bigg)^\frac{1}{\alpha}
+ \bigg( 1 - \frac{3Q}{2} \bigg)^\frac{1}{\alpha} \Bigg) \Bigg]^\alpha
\leq
2^{\alpha - 1}Q + \Bigg[ \bigg(\frac{Q}{2} \bigg)^{\frac{1}{\alpha}}
+ \bigg(1-\frac{3Q}{2} \bigg)^{\frac{1}{\alpha}} \Bigg]^\alpha
\end{aligned}
\end{equation}

\begin{equation}
\label{eq:intermediate ineq}
\begin{aligned}
\iff 2^{1-\frac{1}{\alpha}}Q 
+ (1-Q)^{1-\frac{1}{\alpha}} \Bigg( \bigg( \frac{Q}{2} \bigg)^\frac{1}{\alpha}
+ \bigg( 1 - \frac{3Q}{2} \bigg)^\frac{1}{\alpha} \Bigg)
\leq
\Bigg[ 2^{\alpha - 1}Q + \Bigg[ \bigg(\frac{Q}{2} \bigg)^{\frac{1}{\alpha}}
+ \bigg(1-\frac{3Q}{2} \bigg)^{\frac{1}{\alpha}} \Bigg]^\alpha \Bigg]^\frac{1}{\alpha}
\end{aligned}
\end{equation}

We will substitute 
\begin{equation}
    A = 2^{1-\frac{1}{\alpha}}
\end{equation}
\begin{equation}
    B = \bigg(\frac{Q}{2} \bigg)^{\frac{1}{\alpha}}
+ \bigg(1-\frac{3Q}{2} \bigg)^{\frac{1}{\alpha}}
\end{equation}

Thus, \eqref{eq:intermediate ineq} becomes

\begin{equation}
\label{eq:holder ineq}
\begin{aligned}
AQ + B(1-Q)^{1-\frac{1}{\alpha}} \leq \bigg[ A^\alpha Q + B^\alpha \bigg]^\frac{1}{\alpha}
\end{aligned}
\end{equation}

Next, we can use Hölder's Inequality to prove \eqref{eq:holder ineq}. Since \(Q\), \(1-Q\), \(A\), and \(B\) are all non-negative for \(\alpha>1\) and \(0\leq Q\leq \frac{2}{3}\), the conditions for Hölder's inequality are satisfied.

\begin{equation}
\begin{aligned}
\bigg[ A^\alpha Q + B^\alpha \bigg]^\frac{1}{\alpha}
= \bigg[ A^\alpha Q + B^\alpha \bigg]^\frac{1}{\alpha}
[Q + 1 - Q]^{1 - \frac{1}{\alpha}}
\end{aligned}
\end{equation}

\begin{equation}
\begin{aligned}
 \bigg[ A^\alpha Q + B^\alpha \bigg]^\frac{1}{\alpha} [Q + 1 - Q]^{1 - \frac{1}{\alpha}}
 \geq A Q^\frac{1}{\alpha} Q^{1-\frac{1}{\alpha}} + B(1-Q)^{1-\frac{1}{\alpha}}
[Q + 1 - Q]^{1 - \frac{1}{\alpha}}
\end{aligned}
\end{equation}

Therefore,
\begin{equation}
\begin{aligned}
 \bigg[ A^\alpha Q + B^\alpha \bigg]^\frac{1}{\alpha}
 \geq A Q+ B(1-Q)^{1-\frac{1}{\alpha}}
\end{aligned}
\end{equation}
and the desired inequality holds.
\end{proof}

\end{document}